\def\beq{\begin{equation}}
\def\eeq{\end{equation}}
\def\bea{\begin{eqnarray}}
\def\eea{\end{eqnarray}}
\def\nn{\nonumber}
\def\Z2{\mathbb{Z}_2^2}
\def\Zn{\mathbb{Z}_2^n}
\def\tg#1{\tilde{\gamma}_{#1}}
\def\g{\mathfrak{g}(n)}
\def\sQ{\mathsf{Q}}
\def\sH{\mathsf{H}}
\newtheorem{proposition}{Proposition}
\newtheorem{remark}{Remark}
\title{$\mathbb{Z}_2^n$-Graded extensions of supersymmetric quantum mechanics via Clifford algebras}
\author{N. Aizawa\thanks{{E-mail: {\em aizawa@p.s.osakafu-u.ac.jp}}}, K. Amakawa, S. Doi
\\[10pt]
Department of Physical Science, Osaka Prefecture University, \\
Nakamozu Campus, Sakai, Osaka 599-8531, Japan}
\begin{document}
\maketitle
\thispagestyle{empty}

\vfill
\begin{abstract}
It is shown that  the ${\cal N}=1$ supersymmetric quantum mechanics (SQM) can be extended to a $\mathbb{Z}_2^n$-graded superalgebra.  
This is done by presenting quantum mechanical models which realize, with the aid of Clifford gamma matrices, the $\Zn$-graded Poincar\'e algebra in one-dimensional spacetime. 
Reflecting the fact that the $\Zn$-graded Poincar\'e algebra has a number of central elements, 
a sequence of models defining the $\Zn$-graded version of SQM is provided for a given value of $n.$ 
In a model of the sequence, the central elements having the same $\Zn$-degree are realized as dependent or independent operators. 
It is observed that as use  the Clifford algebra of larger dimension, more central elements are realized as independent operators.  

\end{abstract}

\clearpage
\setcounter{page}{1}

\section{Introduction}

 In the recent paper \cite{BruDup}, Bruce and Duplij introduced a $\mathbb{Z}_2^2$-graded supersymmetric quantum mechanics (SQM) where $ \mathbb{Z}_2^2$ is an abbreviation of $ \mathbb{Z}_2\times \mathbb{Z}_2.$  
 It is a model of quantum mechanical Hamiltonian $H$ which can be factorized in two distinct ways, i.e., 
$ H = Q_{01}^2 = Q_{10}^2.$ 
The difference from the standard ${\cal N}=2$ SQM is that two supercharges $Q_{01}$ and $ Q_{10}$ have different degree. It  means that they close in commutator, instead of anticommutator, in a central element $Z.$ More explicitly they satisfy the relation $[Q_{01}, Q_{10}] =Z.$ 
Thus the algebra spanned by $ H, Q_{01}, Q_{10}$ and $Z$ is not a superalgebra, 
but a $\Z2$-graded superalgebra \cite{Ree,rw1,rw2,sch}. 
Each subset $ H, Q_{01}$ and $ H, Q_{10}$ forms a standard SQM. 
Thus the whole system of the $\mathbb{Z}_2^2$-graded  SQM can be regarded as a doubling of ${\cal N}=1$ SQM closed in a $\Z2$-graded superalgebra. 
It is elucidated in \cite{BruDup} that this system is different from the generalizations of SQM found in the literature. 

The Hamiltonian of Bruce and Duplij is a four by four matrix differential operator. 
It is surprising (at least for the authors) that there exists a $ \mathbb{Z}_2^2$-graded superalgebra behind such a simple Hamiltonian. 
It is then natural to anticipate that a generalization of $ \mathbb{Z}_2^2$  to $ \Zn$ will be possible. 
The purpose of the present work is to demonstrate that this is indeed the case. 
We present various models of $\Zn$-graded SQM in the subsequent sections. 
This is done by constructing a map from the standard SQM to $\Zn$-graded version by the use of the Clifford algebras of various dimensions. 
The present authors realized that the  $\mathbb{Z}_2^2$-graded SQM of Bruce and Duplij is obtained by a mapping from a Lie superalgebra to a $\mathbb{Z}_2^2$-graded superalgebra by the use of the Clifford algebra $ Cl(2)$ \cite{NAKASD}. 
So the results presented in this work is a kind of generalization of \cite{NAKASD}. 
The generalization is, however,  highly non-trivial problem in the following sense.  

The $\Zn$-graded SQM is defined as a quantum mechanical realization of the $\Zn$-graded extension of super-Poincar\'e algebra introduced by Bruce \cite{Bruce} (we make a dimensional reduction of the algebra in \cite{Bruce} which is defined in 4D Minkowski spacetime to 1D). 
Algebra of Bruce has central elements and the number of the central elements increases rapidly as $n$ becomes  larger (see \S \ref{SEC:Poin}).  
This allows us various possibilities of realizing the algebra. 
In some realization all or a part of the central elements having the same $\Zn$ degree are \textit{not} independent and in other realization all the central elements are linearly independent.

The present work is motivated by recent renewed interest in $\Zn$-graded superalgebras.  
They are also referred to as color superalgebras in the literature. 
However, these terminologies are bit confusing and misleading because the algebras under consideration  is, in mathematically more standard terminology, a graded Lie algebra by an abelian group. 
What makes the situation difficult is the fact that even fixing a group for grading possible graded Lie algebras are not unique \cite{rw1,rw2}. 
We shall follow the terminology ``$\Zn$-graded superalgebras", but simply say ``$\Zn$-graded Poincar\'e algebra" for $\Zn$-graded version of supersymmetry algebra of \cite{Bruce}. 

Mathematical interests of $\Zn$-graded superalgebras have been kept since their introduction and algebraic and geometric aspects and `higher graded' version of  supermanifolds have been continuously studied till today. 
Readers may refer the references, e.g. in \cite{Bruce,NaPsiJs2}.

On the other hand, physical implication of $\Zn$-graded superalgebras is not  clear even today. 
There exist some early works discussing on possible connection between $\Zn$-graded superalgebras and various physical problems such as supergravity, string theory, quasispin formalism and so on   \cite{LuRi,vas,jyw,zhe,LR,Toro1,Toro2}. 
Recently, $\Zn$-graded superalgebras started appearing again in physics literatures.  
We mention the works on modifying the spacetime symmetry \cite{tol2,Bruce} and works in connection with parastatistics \cite{tol,StoVDJ}. 
It is also remarkable that $ \Z2$-graded superalgebras appear as symmetries of partial differential equations. 
A very interesting and  physically important example is  that symmetries of wave  equations of nonrelativistic quantum mechanics for fermions are given by a $\Z2$-graded extension of the Schr\"odinger algebra \cite{AKTT1,AKTT2}. 
Furthermore, it is suggested in \cite{BruIba} that mixed symmetry tensors over  $\mathbb{Z}_2^n$-graded manifolds will have some connection to the double field theory since $\mathbb{Z}_2^n$-graded manifolds contain commuting fermions. 

 These works suggest that there exist more places in physics where $\Zn$-graded superalgebras play significant roles. 
The results of the present work, which show simple matrix differential operators have dynamical $\Zn$-graded symmetry, will give an another example of an intimate relation between physics and $\Zn$-graded superalgebras.

The plan of this paper is as follows: 
\S \ref{SEC:Prel} is a preliminary section where we collect basics used in the present work. 
The definition of the $\Zn$-graded Poincar\'e algebra and complex representations of the Clifford algebra $Cl(2n)$ are given. We start presenting models of $\Zn$-graded SQM from \S \ref{SEC:Mini}. 
We discuss a model of minimal dimension which is obtained by using $ Cl(2(n-1)) $ in \S \ref{SEC:Mini}. 
In this model  all the central elements having the same $\Zn$ degree are not independent. 
We also examine the spectrum of the Hamiltonian. 
In \S \ref{SEC:next}, a model  constructed by making use of $ Cl(2n)$ is presented in which degeneracy of the central elements is partially resolved. 
The model presented in \S \ref{SEC:max} is of maximal dimension in the sense that sequential action of any supercharges on a given state always produce a new (i.e.,  independent) state. 
In \S \ref{SEC:InBet}, we discuss the existence of a sequence of models from the minimal to  the maximal dimensions by presenting $n=4, 5 $ as examples. 
We summarize the results and give some remarks in \S \ref{SEC:CR}. 

%

\section{Preliminaries} \label{SEC:Prel}

\subsection{$\Zn$-graded Poincar\'e algebra in one-dimension}
\label{SEC:Poin}

In \cite{Bruce}, the super-Poincar\'e algebra in 4D Minkowski spacetime is generalized to $\Zn$-graded setting. By reducing the spacetime dimension to one, we have a $ \Zn$-graded  version of supersymmetry algebra ($\Zn$-graded Poincar\'e algebra for short) which gives a basis of $\Zn$-graded version of SQM. 
In this section, we review the $\Zn$-graded Poincar\'e algebra of ${\cal N} =1.$
The algebra is denoted simply by $\g.$ 

Let $ \bm{a} = (a_1, a_2, \dots, a_n) \in \Zn$ and define the \textit{parity} of $\bm{a}$ 
by $ \displaystyle p(\bm{a}) = \sum_{k=1}^n a_k \ (\mathrm{mod} \ 2). $ 
The $ \Zn$-graded Poincar\'e algebra $\g$ is spanned by the time translation (Hamiltonian) $H, $ supercharges $ Q_{\bm{a}} $ and central elements $ Z_{\bm{a}\bm{b}}. $ 
We assign the $\Zn$ degree to the elements of $\g$ as follows:
\begin{align}
   \deg(H) &= (0,0,\dots,0), \qquad
   \deg(Q_{\bm{a}}) = \bm{a}, \quad  p(\bm{a}) = 1,
   \nn \\
   \deg(Z_{\bm{a}\bm{b}}) &= \bm{a}+\bm{b} \neq (0,\dots,0), 
   \quad \bm{a} \neq \bm{b},
   \quad p(\bm{a}+\bm{b}) =0.
\end{align}
All supercharges have parity 1 and all central elements have parity 0. 
The defining relations of $\g$ are given by the $\Zn$-graded bracket 
$ \llbracket \cdot, \cdot \rrbracket : \g \times \g \to \g $ with the symmetry property
\begin{equation}
  \llbracket X_{\bm{a}}, Y_{\bm{b}} \rrbracket = -(-1)^{\bm{a}\cdot \bm{b}} \llbracket Y_{\bm{b}}, X_{\bm{a}} \rrbracket
\end{equation}
where $ \bm{a}\cdot \bm{b} $ is the inner product of two $n$-dimensional vectors $ \bm{a} $ and $\bm{b}:$
\begin{equation}
   \bm{a}\cdot \bm{b} = \sum_{j=1}^n a_j b_j \mod 2
\end{equation}
We remark that $ \bm{a}\cdot \bm{b} $ is computed in $\mathrm{mod} \ 2$ throughout this article. 
The $\Zn$-graded bracket is realized by commutator or anticommutator
\begin{equation}
 \llbracket X_{\bm{a}}, Y_{\bm{b}} \rrbracket =  X_{\bm{a}} Y_{\bm{b}} - (-1)^{\bm{a}\cdot \bm{b}}
Y_{\bm{b}} X_{\bm{a}}, \label{gradedcom}
\end{equation}
With the $\Zn$-graded bracket, the non-vanishing relations of $\g$ are summarized in a single expression:
\begin{equation}
   \llbracket Q_{\bm a}, Q_{\bm b} \rrbracket = 
   2\delta_{\bm{a},\bm{b}} H + 2i^{1-\bm{a}\cdot\bm{b}}\, Z_{\bm{a}\bm{b}}
   \label{DefRels}
\end{equation}
where $ \delta_{\bm{a},\bm{b}} = \prod \delta_{a_j,b_j}.$ 
We note that 
\begin{equation}
   Z_{\bm{a}\bm{b}} = -(-1)^{\bm{a}\cdot \bm{b}} Z_{\bm{b}\bm{a}} \ \mathrm{for} \ \bm{a} \neq \bm{b}, 
   \qquad 
   Z_{\bm{a}\bm{a}} = 0.
\end{equation}

The order of the abelian group $ \Zn$ is given by $ |\Zn| = 2^n.$ 
The number of parity 0 elements is  equal to that of  parity 1 elements and it is $2^{n-1}.$ 
The algebra $\g$ is a direct sum (as a vector space) of $2^n$ subspaces and each of them is  labelled by an element of $\Zn:$ $ \g = \bigoplus_{\bm{a} \in \Zn} \g_{\bm{a}}. $ 
The subspace $ \g_{\bm{a}}$ of $ p(\bm{a}) = 1 $ is spanned by only one supercharge $ Q_{\bm{a}}$ (this is the reason why we refer  $\g$  to as ${\cal N}=1$) so that total number of supercharges is $ 2^{n-1}. $ 
The Hamiltonian $H$ spans the one-dimensional subspace $ \g_{(0,0,\dots,0)}$ and $ Z_{\bm{a}\bm{b}}$ spans the other subspaces of parity 0. 
From \eqref{DefRels} we see that the total number of the central elements is equal to 
the number of combinations taking two from $ 2^{n-1},$ that is, 
  $  2^{n-2}(2^{n-1}-1). $ 
The value of this number  increases rapidly as the grading group $\Zn$ becomes larger, i.e., $n$ goes bigger.  
Thus the subspaces spanned by the central elements are, in general, not one-dimensional. 
The dimension of the subspaces is $ 2^{n-2}$ which is obtained by dividing the total number of central elements by the number of subspaces of parity 0 except $ \g_{(0,0,\dots,0)}. $ 
We present some examples of the total number of supercharges $\#(Q),$ central elements $\#(Z)$ and the dimension $\dim(Z)$ of the subspace spanned by the central elements:
\[
  \begin{array}{c|ccccccccc}
     n       &  2  &  3  &  4  &  5  &  6  & 7  &  8  & 9 & 10 \\ \hline
    \#(Q)    &  2  &  4  &  8  &  16 & 32  & 64 & 128 & 256 & 512 \\
    \#(Z)    &  1  &  6  & 28  & 120 & 496 & 2016 & 8128 & 32640 & 130816 \\
    \dim(Z)  &  1  &  2  &  4  &  8  & 16  &  32  & 64   & 128   &  256
  \end{array}
\]

To show the structure of $\g$ more clearly, we take $ \mathfrak{g}(3)$ as an example.  
The $ \mathbb{Z}_2^3$-graded Poincar\'e algebra $ \mathfrak{g}(3)$ has four supercharges $ Q_{100}, Q_{010}, Q_{001}, Q_{111} $ and 
six central elements which span parity 0 subspaces:
\begin{align}
   \mathfrak{g}(3)_{(1,1,0)} &= \langle \ Z_{100,010},\ Z_{001,111} \ \rangle,
    \nn \\
   \mathfrak{g}(3)_{(1,0,1)} &= \langle \ Z_{100,001},\ Z_{010,111} \ \rangle,
     \nn \\
   \mathfrak{g}(3)_{(0,1,1)} &= \langle \ Z_{010,001},\ Z_{100,111} \ \rangle.
   \label{Z3evensubsp}
\end{align}
where the notation of $ \bm{a} \in \mathbb{Z}_2^3$ attached to the elements of $\mathfrak{g}(3)$ is changed slightly for better readablity. 
The defining relations of $\mathfrak{g}(3)$ are summarized below. 
They are indicated by the commutator $[ \cdot, \cdot ] $ and the anticommutator $\{ \cdot, \cdot \}.$

For $ \bm{a}, \bm{b} \in \{ \; (1,0,0), (0,1,0), (0,0,1), (1,1,1)  \;\} $
\begin{equation}
  \{Q_{\bm{a}}, Q_{\bm{a}} \} = 2H,  \qquad 
  [ H, Q_{\bm{a}}] = [H, Z_{\bm{a}\bm{b}}] = 0,
\end{equation}
and for $ \bm{a}, \bm{b} \in \{ \; (1,0,0), (0,1,0), (0,0,1) \;\} $
\begin{equation}
    [Q_{\bm{a}}, Q_{\bm{b}}] = 2i Z_{\bm{a}\bm{b}}, \qquad 
  \{Q_{\bm{a}}, Q_{111} \} = 2Z_{\bm{a},111}.
\end{equation}
The centrality of $Z_{\bm{a}\bm{b}}$ means that the following commutators/anticommutators  vanish identically. For simplicity, notations such as $\mathfrak{g}_{110} := \mathfrak{g}(3)_{(1,1,0)} $ are used for the subspaces in \eqref{Z3evensubsp}
\begin{alignat}{3}
  & \{Q_{100},\, \mathfrak{g}_{110} \}, & \qquad
  & \{Q_{100},\, \mathfrak{g}_{101} \}, & \qquad
  & [ Q_{100},\, \mathfrak{g}_{011} ],
  \nn \\
  & \{Q_{010},\, \mathfrak{g}_{110} \}, & \qquad
  & \{Q_{010},\, \mathfrak{g}_{011} \}, & \qquad
  &  [Q_{010},\, \mathfrak{g}_{101} ], 
  \nn \\
  & \{Q_{001},\, \mathfrak{g}_{101} \}, & \qquad
  & \{Q_{001},\, \mathfrak{g}_{011} \}, & \qquad
  &  [Q_{001},\, \mathfrak{g}_{110} ],   
  \nn \\
  & [Q_{111},\, \mathfrak{g}_{\bm{a}}], & \bm{a} &=110, 101, 011
  \nn \\
  & [ \mathfrak{g}_{110}, \, \mathfrak{g}_{110} ], & 
  & \{ \mathfrak{g}_{110}, \, \mathfrak{g}_{101} \}, & 
  & \{ \mathfrak{g}_{110}, \, \mathfrak{g}_{011} \},
  \nn \\
  & [\mathfrak{g}_{101}, \, \mathfrak{g}_{101} ],  &
  & \{ \mathfrak{g}_{101}, \, \mathfrak{g}_{011} \}, &
  & [\mathfrak{g}_{011}, \, \mathfrak{g}_{011}].
\end{alignat}

\subsection{$\Zn$-graded SQM as representations of $\Zn$-graded Poincar\'e algebra}

The $\Zn$-graded SQM is a quantum mechanical realization of $ \g $ introduced in \S \ref{SEC:Poin}. 
It means that we introduce a Hilbert space having $\Zn$-grading structure
\begin{equation}
  \mathscr{H} = \bigoplus_{\bm{a} \in \Zn} \mathscr{H}_{\bm{a}}
\end{equation}
and the elements of $\g$ are realized as matrix differential operators acting on $\mathscr{H}.$ 
The supercharges are first order differential operators and $H$ is a diagonal matrix whose entries have the form of quantum mechanical Hamiltonian (2nd order differential operator). 
As the Hilbert space $\mathscr{H}$ consists of $|\Zn| = 2^n$ subspaces, 
the minimal size of the matrix differential operators should be $2^n.$ 

In the subsequent sections we show the existence of $\Zn$-graded SQM given by various sizes of matrix differential operators. We refer the size of matrix to as \textit{dimension} of $\Zn$-graded SQM. 
Our method constructing the $\Zn$-graded SQM is to find a mapping from the ordinary SQM to $\Zn$-graded counterpart. This is done by using the Clifford algebra and the choice of different Clifford algebra produces $\Zn$-graded SQM of different dimension. 
One may find various mappings from Lie superalgebras to $\Zn$-graded superalgebras in literature \cite{rw1,rw2,sch,NA2,NAKASD} which also using the Clifford algebras (especially mappings giving $ \mathbb{Z}_2^2$-graded superalgebras). 
The mappings presented in the present work are novel ones which are suitable to consider quantum mechanical realization.  

 Now we recall the building blocks of our construction, i.e., 
${\cal N}=1$ SQM and irreducible representations of the Clifford algebra $ Cl(2n).$  
The ${\cal N}=1$ SQM is given by $2\times 2$ matrix differential operators:
\begin{equation}
    \mathsf{Q} = \frac{1}{\sqrt{2}}(\sigma_2 p + \sigma_1 W(x)), \qquad
    \mathsf{H} = \frac{1}{2}(p^2+W(x)^2) \mathbb{I}_2 - \frac{1}{2}W'(x) \sigma_3
    \label{StandardSQM}
\end{equation}
where $ p = -i\partial_x$ and $ W' = \frac{d W(x)}{dx}.$ 
The symbols $\sigma_i$ and $ \mathbb{I}_m$ denote the  Pauli matrix and   the $m \times m$ identity matrix, respectively. 
The operators are chosen to be hermitian. 
They satisfy the relations:
\begin{equation}
   \{ \sQ, \sQ \} = 2\sH, \qquad [\mathsf{H}, \mathsf{Q}] = 0 \label{StandSQM}
\end{equation}
which are the defining relations of supersymmetry algebra (or $\mathbb{Z}_2$-graded Poincar\'e algebra) in one dimension. 
We also use the following expressions of \eqref{StandardSQM}
\begin{equation}
   \mathsf{Q} = \begin{pmatrix}
                  0 & A^{\dagger} \\ A & 0
                \end{pmatrix},
   \qquad
   \mathsf{H} = \begin{pmatrix}
                  A^{\dagger}A & 0 \\ 0 & A A^{\dagger}
                \end{pmatrix}
   \label{matrixSQM}
\end{equation}
where
\begin{equation}
    A = \frac{1}{\sqrt{2}} (ip + W(x)), 
    \qquad
    A^{\dagger} = \frac{1}{\sqrt{2}} (-ip + W(x)).
    \label{ladderOP}
\end{equation}

  The Clifford algebra $Cl(2n)$ is generated by $\gamma_j\ (j= 1,2,\dots,2n)$  subject to the conditions
\begin{equation}
   \{ \gamma_j, \gamma_k \}  = 2 \delta_{jk}.
\end{equation}
The hermitian complex irreducible representation of $Cl(2n)$ is $2^n$ dimensional and 
given explicitly as follows \cite{Okubo,CaRoTo}:
\begin{align}
  \gamma_1 &= \sigma_1^{\otimes n},
    \qquad
  \gamma_j = \sigma_1^{\otimes (n-j+1)} \otimes \sigma_3 \otimes \mathbb{I}_2^{\otimes (j-2)}, \quad 2 \leq j \leq n,
    \nn \\
   \tilde{\gamma}_j &:= \gamma_{j+n} = \sigma_1^{\otimes(n-j)} \otimes \sigma_2 \otimes \mathbb{I}_2^{\otimes(j-1)}, 
    \quad 1 \leq j \leq n
    \label{Clrep}
\end{align}
Note that $ \gamma_j $ and $\tg{j}$ have non-zero entries at the same position in a $2^n \times 2^n$ matrix. 
The product $ i \gamma_j \tg{j} $ is diagonal, hermitian and idempotent:
\begin{alignat}{2}
   i \gamma_1 \tg{1} &= -\mathbb{I}_2^{\otimes (n-1)} \otimes \sigma_3,
   \nonumber \\
    i \gamma_j \tg{j} &= -\mathbb{I}_2^{\otimes(n-j)}\otimes \sigma_3 \otimes \sigma_3 \otimes \mathbb{I}_2^{(j-2)}, & &\quad 2 \leq j \leq n
    \nonumber 
    \\
   (i \gamma_j \tg{j})^{\dagger} &= i \gamma_j \tg{j}, \qquad
   (i \gamma_j \tg{j})^2 = \mathbb{I}_{2^n}, 
   & & \quad 1 \leq j \leq n     
   \label{maxrel1} 
\end{alignat}
Furthermore, it is easy to see the relations
\begin{equation}
  \{ \gamma_j,\, i\gamma_j \tg{j} \} = 
  [\gamma_j, \, i \gamma_k \tg{k} ] = 
  [i\gamma_j \tg{j},\, i\gamma_k\tg{k}] = 0, \quad j \neq k 
  \label{maxrel2}
\end{equation}
These relations will be used to construct models of the $\Zn$-graded SQM.

Before giving the models of $\Zn$-graded SQM, we point out a similarity to the standard SQM. 
In the quantum mechanical realization of $\g$, the supercharge $ Q_{\bm{a}}$ is realized as a hermitian operator and it factorizes the Hamiltonian: $ H = Q_{\bm{a}}^2. $ 
Thus one may repeat the same discussion as the standard SQM and see the following (see e.g. \cite{CoKhSu,Junker}):
\begin{proposition}
 In any models of the $\Zn$-graded SQM, any eigenvalue $E$ of the Hamiltonian $H$  is positive semi-definite $ E \geq 0.$  
\end{proposition}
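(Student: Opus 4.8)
The plan is to reproduce the standard supersymmetric-quantum-mechanics positivity argument, which survives the passage to the $\Zn$-graded setting essentially unchanged because of the two features emphasized just before the statement: every supercharge is realized as a hermitian operator and it squares to the Hamiltonian. First I would fix any single admissible multi-index $\bm{a}$ with $p(\bm{a})=1$, so that a supercharge $Q_{\bm{a}}$ exists. The relation \eqref{DefRels} then yields $H=Q_{\bm{a}}^2$: indeed $\bm{a}\cdot\bm{a}=\sum_j a_j^2 = \sum_j a_j = p(\bm{a}) = 1 \pmod 2$, so the graded bracket $\llbracket Q_{\bm{a}}, Q_{\bm{a}}\rrbracket$ is the anticommutator $2Q_{\bm{a}}^2$, while the right-hand side of \eqref{DefRels} collapses to $2H$ because $\delta_{\bm{a},\bm{a}}=1$ and $Z_{\bm{a}\bm{a}}=0$. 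Only one supercharge is needed; the choice of $\bm{a}$ is immaterial, since every $Q_{\bm{a}}$ factorizes the same $H$.

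Next I would take any eigenvector $|\psi\rangle \in \mathscr{H}$ with $H|\psi\rangle = E|\psi\rangle$ and $\langle\psi|\psi\rangle > 0$. Using $H = Q_{\bm{a}}^2$ together with the hermiticity $Q_{\bm{a}}^\dagger = Q_{\bm{a}}$, I would compute
\[
  E\,\langle\psi|\psi\rangle = \langle\psi|H|\psi\rangle = \langle\psi| Q_{\bm{a}}^\dagger Q_{\bm{a}}|\psi\rangle = \big\| Q_{\bm{a}}|\psi\rangle \big\|^2 \ge 0,
\]
and then divide by the strictly positive norm $\langle\psi|\psi\rangle$ to conclude $E \ge 0$.

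There is no genuine analytic obstacle here: the argument is purely algebraic and rests only on the positivity of $Q_{\bm{a}}^\dagger Q_{\bm{a}}$ with respect to the inner product on $\mathscr{H}$. The one point deserving care---the only place where the $\Zn$-graded models could a priori differ from the ordinary case---is to confirm that the conventions adopted make each $Q_{\bm{a}}$ genuinely self-adjoint, so that $H = Q_{\bm{a}}^\dagger Q_{\bm{a}}$ and not merely $H=Q_{\bm{a}}^2$. Since the models are assembled from the hermitian operators $\sQ$ and $\sH$ of \eqref{StandardSQM} tensored with the hermitian gamma-matrix representation \eqref{Clrep}, this hermiticity is manifest, and the bound $E \ge 0$ therefore holds for every model in the sequence, independently of $n$ and of how the central elements $Z_{\bm{a}\bm{b}}$ are realized.
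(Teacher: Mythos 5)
Your proposal is correct and follows exactly the paper's own route: the paper observes that each supercharge is hermitian and factorizes the Hamiltonian, $H=Q_{\bm{a}}^2$, and then invokes the standard SQM positivity argument, which is precisely the computation $E\,\langle\psi|\psi\rangle=\|Q_{\bm{a}}|\psi\rangle\|^2\ge 0$ you spell out. Your verification that $\bm{a}\cdot\bm{a}=p(\bm{a})=1$ so that \eqref{DefRels} indeed gives $\{Q_{\bm{a}},Q_{\bm{a}}\}=2H$ is a nice explicit filling-in of a step the paper leaves implicit.
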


%
\setcounter{equation}{0}
\section{Minimal dimensional $\Zn$-graded SQM by $ Cl(2(n-1)) $}
\label{SEC:Mini}

Let us consider the Clifford algebra $Cl(2(n-1)) $
whose representation \eqref{Clrep} is  $2^{n-1}$ dimensional.  

\begin{proposition} \label{P2}
Define the $2^n$ dimensional matrix differential operators $H, Q_{\bm{a}}, Z_{\bm{a}\bm{b}}\; (\bm{a} \neq \bm{b})$ by
\begin{align}
   H &= \mathbb{I}_{2^{n-1}} \otimes \sH, 
   \qquad
   Q_{\bm{a}} = 
        \begin{cases}
              X_{\bm{a}} \otimes i \sQ \sigma_3, & a_n = 0 
               \\[5pt]
              X_{\bm{a}} \otimes \sQ,          & a_n = 1
        \end{cases},
   \nn \\
   Z_{\bm{a}\bm{b}} &= 
        \begin{cases}
            (-i)^{1-\bm{a}\cdot\bm{b}} X_{\bm{a}} X_{\bm{b}} \otimes \mathsf{H}, & a_n = b_n
             \\[5pt]
            (-1)^{b_n}\, i^{\bm{a}\cdot\bm{b}} X_{\bm{a}} X_{\bm{b}} \otimes \mathsf{H} \sigma_3, & a_n \neq b_n
        \end{cases}
        \label{MinimalSQM}
\end{align}
where 
\begin{equation}
   X_{\bm{a}} = i^{h(\bm{a})} \gamma_1^{a_1} \gamma_2^{a_2} \cdots \gamma_{n-1}^{a_{n-1}},
   \qquad
   h(\bm{a}) = \sum_{j=1}^{n-2} \sum_{k=j+1}^{n-1} a_j a_k.
   \label{MinimalX}
\end{equation}
Then the matrix differential operators realize the relations \eqref{DefRels} of $\g.$ 
Thus they give a $2^n$ dimensional $\Zn$-graded SQM. 
\end{proposition}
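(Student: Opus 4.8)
The plan is to verify the single family of relations \eqref{DefRels} by exploiting the tensor-product form of the operators in \eqref{MinimalSQM}: each generator factorizes as an operator on the $2^{n-1}$-dimensional Clifford factor (a monomial $X_{\bm a}$, or a product $X_{\bm a}X_{\bm b}$) tensored with a $2\times2$ operator built from $\sQ$, $\sH$ and $\sigma_3$. Since $\llbracket Q_{\bm a},Q_{\bm b}\rrbracket=Q_{\bm a}Q_{\bm b}-(-1)^{\bm a\cdot\bm b}Q_{\bm b}Q_{\bm a}$ and the two tensor slots multiply independently, the computation splits into an algebra of the $X$'s and an algebra of $2\times2$ blocks. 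First I would record the elementary facts driving everything. For the Clifford slot: each $X_{\bm a}$ is a Hermitian involution, $X_{\bm a}^2=\mathbb I_{2^{n-1}}$, and any two satisfy $X_{\bm a}X_{\bm b}=(-1)^{c(\bm a,\bm b)}X_{\bm b}X_{\bm a}$ with $c(\bm a,\bm b)=|\bm a'|\,|\bm b'|-\sum_{j=1}^{n-1}a_jb_j \pmod 2$, where $\bm a'=(a_1,\dots,a_{n-1})$ and $|\bm a'|$ is its Hamming weight; both facts follow from $\{\gamma_j,\gamma_k\}=2\delta_{jk}$ by moving generators past one another. For the $2\times2$ slot: $\sQ^2=\sH$ and $[\sH,\sQ]=0$ from \eqref{StandSQM}, together with $\{\sQ,\sigma_3\}=0$ and $[\sH,\sigma_3]=0$, which hold because $\sQ$ is a combination of $\sigma_1,\sigma_2$ while $\sH$ is diagonal in the $\sigma_3$ basis.

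The phase $i^{h(\bm a)}$ in \eqref{MinimalX} is engineered precisely to make $X_{\bm a}$ an involution: an unnormalized product of $m=|\bm a'|$ distinct $\gamma$'s squares to $(-1)^{m(m-1)/2}$, while $h(\bm a)=\sum_{1\le j<k\le n-1}a_ja_k=\tfrac{1}{2}m(m-1)$, so the factor $i^{2h(\bm a)}$ exactly cancels that sign. With this, the diagonal case $\bm a=\bm b$ is immediate: for $a_n=1$ the second slot gives $\sQ^2=\sH$, and for $a_n=0$ it gives $(i\sQ\sigma_3)^2=-\sQ\sigma_3\sQ\sigma_3=\sH$ after one use of $\{\sQ,\sigma_3\}=0$; combined with $X_{\bm a}^2=\mathbb I_{2^{n-1}}$ this yields $Q_{\bm a}^2=\mathbb I_{2^{n-1}}\otimes\sH=H$, hence $\llbracket Q_{\bm a},Q_{\bm a}\rrbracket=2Q_{\bm a}^2=2H$, matching \eqref{DefRels} since $Z_{\bm a\bm a}=0$.

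The heart of the proof, and the step I expect to demand the most care, is the off-diagonal case $\bm a\neq\bm b$, where I must reproduce $\llbracket Q_{\bm a},Q_{\bm b}\rrbracket=2i^{1-\bm a\cdot\bm b}Z_{\bm a\bm b}$ with $Z_{\bm a\bm b}$ as in \eqref{MinimalSQM}. Factorizing each product, the bracket becomes $X_{\bm a}X_{\bm b}\otimes\big(M_{\bm a\bm b}-(-1)^{\bm a\cdot\bm b+c(\bm a,\bm b)}M_{\bm b\bm a}\big)$, where $M_{\bm a\bm b}$ is the product of the two $2\times2$ slots. I would organize the verification by the four values of $(a_n,b_n)$, since these select both the slots and the scalar prefactors in \eqref{MinimalSQM}. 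In each subcase one finds $M_{\bm b\bm a}=\pm M_{\bm a\bm b}$, the sign arising from pushing $\sQ$ through $\sigma_3$; the decisive input is the parity constraint $p(\bm a)=p(\bm b)=1$, which forces $|\bm a'|\equiv1-a_n$ and $|\bm b'|\equiv1-b_n\pmod 2$ and thereby fixes the exponent $\bm a\cdot\bm b+c(\bm a,\bm b)$ so that the two terms reinforce rather than cancel, giving $2\,X_{\bm a}X_{\bm b}\otimes M_{\bm a\bm b}$ in every subcase. What then remains is to match the scalar phases: for $a_n=b_n=0$ one gets $M_{\bm a\bm b}=\sH$ and $2\,X_{\bm a}X_{\bm b}\otimes\sH$, which equals $2i^{1-\bm a\cdot\bm b}Z_{\bm a\bm b}$ because $i^{1-\bm a\cdot\bm b}(-i)^{1-\bm a\cdot\bm b}=1$; for $a_n\neq b_n$ the reordering $\sQ\sigma_3\sQ=-\sH\sigma_3$ produces the slot $\sH\sigma_3$ and supplies the factor $(-1)^{b_n}$, while $i^{1-\bm a\cdot\bm b}\,i^{\bm a\cdot\bm b}=i$ reproduces the prefactor $(-1)^{b_n}i^{\bm a\cdot\bm b}$ of $Z_{\bm a\bm b}$. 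Keeping all these signs straight across the subcases is the genuine obstacle.

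Finally I would confirm centrality, so that $\g$ is faithfully realized rather than merely its top relation. Centrality of $H=\mathbb I_{2^{n-1}}\otimes\sH$ is immediate from $[\sH,\sQ]=[\sH,\sigma_3]=0$. For each $Z_{\bm a\bm b}$, its graded bracket with any $Q_{\bm c}$ again factorizes into a Clifford commutation sign times a $2\times2$ bracket, and the same parity identities force the two contributions to cancel; these vanishings are precisely the (anti)commutators listed after \eqref{Z3evensubsp} in the case $n=3$, so this step is routine once the sign rules of the first paragraph are in hand.
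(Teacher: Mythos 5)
Your proposal is correct and takes essentially the same route as the paper's proof: verify \eqref{DefRels} directly by factorizing every bracket into the Clifford slot and the $2\times 2$ slot, using $X_{\bm{a}}^2=\mathbb{I}_{2^{n-1}}$, a commutation/anticommutation rule for the monomials $X_{\bm{a}}$, the parity constraint $p(\bm{a})=1$, and the identities $\sQ^2=\sH$, $\{\sQ,\sigma_3\}=0$, $[\sH,\sQ]=[\sH,\sigma_3]=0$. Your closed-form sign exponent $c(\bm{a},\bm{b})\equiv |\bm{a}'|\,|\bm{b}'|+\sum_{j=1}^{n-1}a_jb_j \pmod 2$ is precisely the paper's key lemma \eqref{Xcommute} written as a formula (together with $|\bm{a}'|\equiv 1-a_n$ it reproduces all four cases there), and your centrality step is the paper's argument with the case-by-case expansions \eqref{Com2Anti}--\eqref{AntiExpand2} replaced by equivalent mod-2 bookkeeping.
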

\begin{proof}
 One may verify the relations in \eqref{DefRels} by direct computation using the properties of $ X_{\bm{a}}. $ 
First, note that $X_{\bm{a}}$ is hermitian and idempotent:
\begin{equation}
   X_{\bm{a}}^{\dagger} = X_{\bm{a}}, \qquad X_{\bm{a}}^2 = \mathbb{I}_{2^{n-1}}.
\end{equation}
It is then immediate to see, noting  $ (i \sQ \sigma_3)^2 = \mathsf{Q}^2 = \mathsf{H},$ that $ Q_{\bm{a}}^2 = X_{\bm{a}}^2 \otimes \mathsf{H} = H $ for any supercharges. 
Thus  we have verified that $ \{ Q_{\bm{a}}, Q_{\bm{a}} \} = 2H $ and $ [H, Q_{\bm{a}}] = 0.$ 

Next, we show the relation: 
\begin{equation}
   \llbracket Q_{\bm{a}}, Q_{\bm{b}} \rrbracket = 2i^{1-\bm{a} \cdot \bm{b}} Z_{\bm{a}\bm{b}}, \quad \bm{a} \neq \bm{b}
   \label{QQZ}
\end{equation}
To this end, four separate cases should be considered:
\begin{align}
 \bm{a}\cdot\bm{b} =0 \quad &\Rightarrow \quad  \llbracket Q_{\bm{a}}, Q_{\bm{b}} \rrbracket =[ Q_{\bm{a}}, Q_{\bm{b}} ] = 
   \begin{cases}
    [ X_{\bm{a}}, X_{\bm{b}} ] \otimes \mathsf{H} & a_n = b_n
     \\[9pt]
     -i\{ X_{\bm{a}}, X_{\bm{b}} \} \otimes \mathsf{H} \sigma_3, & a_n =0,\, b_n=1
   \end{cases} 
   \nn \\[10pt]
 \bm{a}\cdot\bm{b} =1 \quad &\Rightarrow \quad  \llbracket Q_{\bm{a}}, Q_{\bm{b}} \rrbracket = \{ Q_{\bm{a}}, Q_{\bm{b}} \} =    
   \begin{cases}
    \{ X_{\bm{a}}, X_{\bm{b}} \} \otimes \mathsf{H} & a_n = b_n
     \\[9pt]
     -i[ X_{\bm{a}}, X_{\bm{b}} ] \otimes \mathsf{H} \sigma_3, & a_n =0,\, b_n=1
   \end{cases}
\end{align}
We are able to choose  $ a_n = 0,\, b_n=1$ without loss of generality since the alternate choice cause only the overall sign change. 
The relation \eqref{QQZ} is a direct consequence of the fact that if $ \bm{a} \neq \bm{b},$ then $ X_{\bm{a}}$  commutes or anticommutes with $ X_{\bm{b}}:$ 
\begin{alignat}{2}
   [X_{\bm{a}}, X_{\bm{b}}] &= 0 &\quad & \mathrm{if \ }
     \begin{cases}
        \mathrm{(i)} \ \bm{a}\cdot\bm{b} =0, \ a_n \neq b_n
        \\[5pt]
        \mathrm{(ii)} \ \bm{a}\cdot\bm{b} =1, \ a_n = b_n
     \end{cases}
   \nn \\[7pt]
   \{X_{\bm{a}}, X_{\bm{b}} \} &= 0 &\quad & \mathrm{if \ }
     \begin{cases}
        \mathrm{(iii)} \ \bm{a}\cdot\bm{b} =0, \ a_n = b_n
        \\[5pt]
        \mathrm{(iv)} \ \bm{a}\cdot\bm{b} =1, \ a_n \neq b_n
     \end{cases} 
     \label{Xcommute}
\end{alignat}
The relation \eqref{Xcommute} is verified in the following way. 
One may write $ X_{\bm{a}} = X'_{\bm{a}} Y $ and $ X_{\bm{b}} = X'_{\bm{b}} Y $ where $Y$ denotes the product of $\gamma$'s common to $X_{\bm{a}}$ and $ X_{\bm{b}}.$ 

\medskip
\noindent
a)  $ a_n \neq b_n. $ 
This means that one of $ a_n, b_n $ is 0 and the another is 1. 
For clarity, we consider the case $ a_n = 0, b_n = 1.$ 
Then $ X_{\bm{a}}$ is a product of an odd number of $\gamma$'s while $ X_{\bm{b}}$ is a product of an even number of $\gamma$'s. 
Note that $ \bm{a} \cdot \bm{b} = \sum_{j=1}^{n-1} a_j b_j$ is a summation up to $n-1$ (not up to $n$) as $ a_n b_n = 0. $
Thus $ \bm{a} \cdot \bm{b} = 0 $ implies that $Y$ is a product of even number of $\gamma$'s so that $ Y$ commutes with $ X'_{\bm{a}} $ and $ X'_{\bm{b}}. $ 
Furthermore, $ X'_{\bm{a}} $ and $ X'_{\bm{b}} $ are a product of  odd and even number of $\gamma$'s, respectively. Thus we see $ [X'_{\bm{a}} ,X'_{\bm{b}} ] = 0. $ 
These observations conclude that $ [X_{\bm{a}}, X_{\bm{b}}] = 0.$ 
The case $ a_n = 1, b_n = 0$ is treated in the same way and the case (i) has been verified.  
When  $ \bm{a} \cdot \bm{b} = 1, $ $Y$ is a product of odd number of $\gamma$'s so that 
$ X'_{\bm{a}} $ and $ X'_{\bm{b}} $ are a product of even and odd number of $\gamma$'s, respectively. Thus $[ X'_{\bm{a}}, Y ] = \{X'_{\bm{b}}, Y\} = 0 $ and $ [X'_{\bm{a}}, X'_{\bm{b}}] = 0.$ It follows the case (iv). 

\medskip
\noindent
b) $ a_n = b_n = 0. $ 
For this value of $a_n, b_n$, $ X_{\bm{a}}$ and $X_{\bm{b}}$ are a product of an odd number of $\gamma$'s. 
Because of the reason same as the case a), 
$ Y, X'_{\bm{a}}$ and $ X'_{\bm{b}}$ consist of the following number of $\gamma$'s: 

\begin{center}
\begin{tabular}{ccc}
   $ \bm{a} \cdot \bm{b} $ & $ Y $ & $X'_{\bm{a}}, X'_{\bm{b}} $
   \\ \hline
      0 & even & odd 
   \\
      1 & odd & even
\end{tabular}
\end{center}

\noindent
Therefore, $ \bm{a} \cdot \bm{b} = 0 $ implies that $[X'_{\bm{a}}, Y] = [X'_{\bm{b}},Y] = 0 $ and $ \{ X'_{\bm{a}}, X'_{\bm{b}} \} = 0.$ 
If follows that $ \{ X_{\bm{a}}, X_{\bm{b}} \} = 0. $ 
While, $ \bm{a} \cdot \bm{b} = 1 $ implies that $ Y,  X'_{\bm{a}},  X'_{\bm{b}} $ commute each other so that $ [ X_{\bm{a}}, X_{\bm{b}} ] = 0. $

\medskip
\noindent
c) $ a_n = b_n = 1.$ 
For this value of $a_n, b_n$, $ X_{\bm{a}}$ and $X_{\bm{b}}$ are a product of an even number of $\gamma$'s. 
Noting that $ \bm{a} \cdot \bm{b} = \sum_{j=1}^{n-1} a_j b_j + 1,$ one may see that 
$ Y, X'_{\bm{a}}$ and $ X'_{\bm{b}}$ consist of the following number of $\gamma$'s: 

\begin{center}
\begin{tabular}{ccc}
   $ \bm{a} \cdot \bm{b} $ & $ Y $ & $X'_{\bm{a}}, X'_{\bm{b}} $
   \\ \hline
      0 & odd & odd 
   \\
      1 & even & even
\end{tabular}
\end{center}

\noindent
Therefore, $ \bm{a} \cdot \bm{b} = 0 $ implies that 
$ Y,  X'_{\bm{a}},  X'_{\bm{b}} $ anticommute each other so that $ \{ X_{\bm{a}}, X_{\bm{b}} \} = 0. $
While, $ \bm{a} \cdot \bm{b} = 1 $ implies that $ Y,  X'_{\bm{a}},  X'_{\bm{b}} $ commute each other so that $ [ X_{\bm{a}}, X_{\bm{b}} ] = 0. $

Thus the cases b) and c) have verified the cases (ii) and (iii).

Finally, we need to show the centrality of $ Z_{\bm{a}\bm{b}}. $ 
To this end, it is sufficient to show that $ \llbracket Z_{\bm{a}\bm{b}}, Q_{\bm{c}} \rrbracket =0 $ since $ \llbracket Z_{\bm{a}\bm{b}}, Z_{\bm{c}\bm{d}} \rrbracket = \llbracket Z_{\bm{a}\bm{b}}, H \rrbracket = 0 $  follows immediately from \eqref{QQZ}. 
Using $ [\mathsf{H}, \sigma_3] = \{ \mathsf{Q}, \sigma_3 \} = 0, $ one may see that 
if $ a_n = b_n $ then
\begin{equation}
  \llbracket Z_{\bm{a}\bm{b}}, Q_{\bm{c}} \rrbracket = \llbracket X_{\bm{a}} X_{\bm{b}},\, X_{\bm{c}} \rrbracket \otimes \mathsf{HQ} \sigma_3^{1-a_n}
\end{equation}
and if $ a_n \neq b_n $ then
\begin{align}
  [ Z_{\bm{a}\bm{b}}, Q_{\bm{c}} ] &= \{  X_{\bm{a}} X_{\bm{b}},\, X_{\bm{c}} \} \otimes 
  \mathsf{HQ} \sigma_3^{a_n},
  \nn \\
  \{ Z_{\bm{a}\bm{b}}, Q_{\bm{c}} \} &= [  X_{\bm{a}} X_{\bm{b}},\, X_{\bm{c}} ] \otimes  \mathsf{HQ} \sigma_3^{a_n}
\end{align}
where we have dropped the irrelevant numerical factors.
Now we show that $ \llbracket X_{\bm{a}} X_{\bm{b}},\, X_{\bm{c}} \rrbracket  = 0 $ holds true identically. 

\noindent
a) $ a_n = b_n $ and $ (\bm{a} + \bm{b})\cdot \bm{c} = 0.$ 
In this case $ \bm{a} \cdot \bm{c} = \bm{b} \cdot \bm{c}. $ 

If the both inner product is equal to 0, one may use (iii) of \eqref{Xcommute} then we see that
\begin{equation}
  [ X_{\bm{a}} X_{\bm{b}},\, X_{\bm{c}} ] = 
  X_{\bm{a}} \{ X_{\bm{b}},\, X_{\bm{c}} \} - \{ X_{\bm{a}},\,  X_{\bm{c}} \} X_{\bm{b}} = 0.
  \label{Com2Anti}
\end{equation}
If the both inner product is equal to 1, one may use (ii) of \eqref{Xcommute}
\begin{equation}
    [ X_{\bm{a}} X_{\bm{b}},\, X_{\bm{c}} ] = 
    X_{\bm{a}} [X_{\bm{b}},\, X_{\bm{c}} ] + [X_{\bm{a}} , \,X_{\bm{c}} ] X_{\bm{b}} = 0.
    \label{Com2Com}
\end{equation}

\noindent
b) $ a_n = b_n $ and $ (\bm{a} + \bm{b})\cdot \bm{c} = 1.$
There are two subcases. 

If $ \bm{a} \cdot \bm{c} = 0 $ and $ \bm{b} \cdot \bm{c} = 1, $ then because of (ii) (iii) of \eqref{Xcommute} we have
\begin{equation}
  \{  X_{\bm{a}} X_{\bm{b}},\, X_{\bm{c}} \} = X_{\bm{a}} [ X_{\bm{b}}, \, X_{\bm{c}}] 
  + \{ X_{\bm{a}} , \, X_{\bm{c}} \} X_{\bm{b}} = 0.   
  \label{AntiExpand1} 
\end{equation}
If $ \bm{a} \cdot \bm{c} = 1 $ and $ \bm{b} \cdot \bm{c} = 0, $ then similarly
\begin{equation}
  \{  X_{\bm{a}} X_{\bm{b}},\, X_{\bm{c}} \} = X_{\bm{a}} \{ X_{\bm{b}},\, X_{\bm{c}} \} 
  - [X_{\bm{a}},\, X_{\bm{c}}] X_{\bm{b}} = 0.
  \label{AntiExpand2}
\end{equation}

\noindent
c) $ a_n \neq b_n.$ 
Repeating the computations similar to a) and b), but using (i) (iv) of \eqref{Xcommute}, one may easily see $ \llbracket X_{\bm{a}} X_{\bm{b}},\, X_{\bm{c}} \rrbracket  = 0. $

This completes the proof. 
\end{proof}

\begin{remark} \label{R1}
In the representation \eqref{MinimalSQM}, all the central elements belonging to a given subspace $\g_{\bm{a}}$ differ only by a constant multiple, so $ \dim \g_{\bm{a}} = 1.$ 
This is readily seen by noting that $ X_{\bm{a}} X_{\bm{b}} \sim \gamma_1^{a_1+b_1} \gamma_2^{a_2+b_2} \cdots \gamma_{n-1}^{a_{n-1}+b_{n-1}}$ where $\sim$ denotes equality up to a constant multiple (we use this notation throughout this article). 
It follows that  
$   X_{\bm{a}} X_{\bm{b}} \sim  X_{\bm{c}} X_{\bm{d}} $ 
if $\bm{a}+\bm{b} = \bm{c}+\bm{d}$ which shows that the central elements in $\g_{\bm{a}+\bm{b}}$ differ only by a constant multiple.  
\end{remark}

\medskip
The operators in \eqref{MinimalSQM} act on the Hilbert space $ \mathscr{H} = L^2({\cal I}) \otimes \mathbb{C}^{2^n} $ where $ {\cal I} \subset \mathbb{R}$ is an open interval of $\mathbb{R}$ determined by the choice of potential. 
The action of $ Q_{\bm{a}}$ and $ Z_{\bm{a}\bm{b}}$ on the subspace $ \mathscr{H}_{\bm{a}}$ defines a mapping form one subspace to another with different $\Zn$-grading:
\begin{equation}
    Q_{\bm{a}} \mathscr{H}_{\bm{b}} \subset \mathscr{H}_{\bm{a}+\bm{b}},
    \qquad
    Z_{\bm{a}\bm{b}} \mathscr{H}_{\bm{c}} \subset \mathscr{H}_{\bm{a}+\bm{b}+\bm{c}}.
\end{equation}

Let us investigate the spectrum of the Hamiltonian $H.$ 
The zero energy ground state $ \Psi_0(x) \in \mathscr{H}$ is defined as the state  annihilated by all supercharges: $ Q_{\bm{a}} \Psi_0(x) = 0 $ for all $\bm{a}$ of parity 1.  
The action of any supercharge on $\Psi_0(x)$ yields the same conditions. 
To see this, we divide $2^n$ components of $\Psi_0(x)$ into two groups as follows:
\begin{equation}
   \Psi_0(x) = (\psi_1(x),\varphi_1(x),\psi_2(x),\varphi_2(x), \dots, \psi_{2^n-1}(x), \varphi_{2^n-1}(x))^T
\end{equation}
Members of one of the groups is denoted by $\psi_i $ and of the other group by $\varphi_i.$  
Because of   the form of $Q_{\bm{a}}$ in \eqref{MinimalSQM}, $ Q_{\bm{a}} \Psi_0(x) $ consists of the action of $ \mathsf{Q}$ or $ \mathsf{Q}\sigma_3$  on a pair $(\psi_i(x), \varphi_i(x)).$ 
Recalling the form of $\mathsf{Q} $ given in \eqref{matrixSQM}, we see that this action gives the relations:
\begin{equation}
       A \psi_i(x) = 0, \qquad A^{\dagger} \varphi_i(x) = 0. \label{GroundCond}
\end{equation}
If $ \mathscr{H} = L^2(\mathbb{R}) \otimes \mathbb{C}^{2^n}, $ it is known that $ \psi_i(x) $ and $ \varphi_i(x)$ satisfying \eqref{GroundCond} are not normalizable simultaneously \cite{CoKhSu,Junker}. 
Thus, only one of the sets $ \{ \; \psi_i(x) \; \}, \{ \; \varphi_i(x) \; \} $ is normalizable, or both sets are not normalizable. 
If the set $ \{ \; \psi_i(x) \; \} $ is normalizable, then 
$ \Psi_0(x) = (\psi_1(x),0,\psi_2(x),0, \dots, \psi_{2^n-1}(x), 0)^T \equiv \Psi_0^{(1)}  $ gives the ground state.  
While, if the other set $ \{ \; \varphi_i(x) \; \} $ is normalizable, then 
$ \Psi_0(x) = (0,\varphi_1(x),0,\varphi_2(x), \dots, 0, \varphi_{2^n-1}(x))^T \equiv \Psi_0^{(2)} $ gives the ground state.   
Obviously, $\Psi_0^{(1)} $ and $\Psi_0^{(2)} $ belong to the different subspaces of $\mathscr{H}.$  
In this way, we have proved the following:
\begin{proposition} \label{P3}
The zero energy state of $H$ in \eqref{MinimalSQM} is either non-existent or $2^{n-1}$ fold degenerate. The degenerate wave function is given by $ \Psi_0^{(1)}$ or $ \Psi_0^{(2)}. $
\end{proposition}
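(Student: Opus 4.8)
The plan is to identify the zero-energy states with the common kernel of the supercharges and then reduce the problem, slot by slot, to the familiar ground-state analysis of the $2\times2$ model \eqref{matrixSQM}. First I would note that the computation in the proof of Proposition~\ref{P2} gives $Q_{\bm a}^2 = H$ for every supercharge. Since each $Q_{\bm a}$ is hermitian, a state $\Psi_0$ of zero energy obeys $0 = \langle \Psi_0, H\Psi_0\rangle = \| Q_{\bm a}\Psi_0 \|^2$, so $Q_{\bm a}\Psi_0 = 0$ for all $\bm a$; conversely, by the positivity of the spectrum (Proposition~1) any state killed by a single supercharge already has zero energy, because $H\Psi_0 = Q_{\bm a}^2\Psi_0 = 0$. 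Hence the zero-energy eigenspace equals the kernel of any one conveniently chosen supercharge, and in particular all the $\ker Q_{\bm a}$ automatically coincide.

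Next I would exploit the tensor-product form of \eqref{MinimalSQM}. Writing a candidate state as $2^{n-1}$ two-component blocks $u_i = (\psi_i,\varphi_i)^T$, the factor $X_{\bm a}$ acts only on the $\mathbb{C}^{2^{n-1}}$ slot and is invertible (indeed $X_{\bm a}^2 = \mathbb{I}_{2^{n-1}}$), acting as a generalized permutation that merely relabels the $2^{n-1}$ slots up to phases. Consequently $Q_{\bm a}\Psi_0 = 0$ is equivalent to the vanishing of $\sQ\,u_i$ (for $a_n=1$) or $i\sQ\sigma_3\,u_i$ (for $a_n=0$) in every slot, and since $\sigma_3$ only flips the sign of one off-diagonal entry, both blocks impose the identical pair of first-order equations \eqref{GroundCond}, namely $A\psi_i = 0$ and $A^{\dagger}\varphi_i = 0$. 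The quickest route, avoiding bookkeeping, is to single out the supercharge $Q_{(0,\dots,0,1)} = \mathbb{I}_{2^{n-1}}\otimes\sQ$, whose kernel is manifestly this block condition. With $A,A^{\dagger}$ as in \eqref{ladderOP} these integrate to
\[
  A\psi_i = 0 \ \Rightarrow\ \psi_i \sim e^{-\int W\,dx}, \qquad A^{\dagger}\varphi_i = 0 \ \Rightarrow\ \varphi_i \sim e^{+\int W\,dx}.
\]

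Finally I would invoke the standard one-dimensional supersymmetric-quantum-mechanics fact that the two solutions above are reciprocal up to a constant and hence cannot both be square-integrable on $L^2(\mathbb{R})$ \cite{CoKhSu,Junker}. This yields the trichotomy: if $e^{-\int W\,dx}$ is normalizable, every $\psi_i$ may be taken proportional to it with an independent constant while $\varphi_i = 0$, giving $\Psi_0^{(1)}$; if $e^{+\int W\,dx}$ is normalizable one obtains $\Psi_0^{(2)}$; and if neither is normalizable no zero-energy state exists. Counting the free constants is then routine, since there are exactly $2^{n-1}$ independent blocks $u_i$, so the surviving family is $2^{n-1}$-dimensional, which is the asserted $2^{n-1}$-fold degeneracy. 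I expect the only delicate points to be the decoupling step, where invertibility of $X_{\bm a}$ guarantees that the off-diagonal mixing cannot produce kernel elements outside the per-slot block form, together with the imported non-simultaneous normalizability; the degeneracy count itself is pure bookkeeping.
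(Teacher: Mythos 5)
Your proof is correct and takes essentially the same route as the paper: both reduce the zero-energy condition, block by block via the tensor-product form of the supercharges, to $A\psi_i=0$, $A^{\dagger}\varphi_i=0$ as in \eqref{GroundCond}, invoke the standard fact that $e^{-\int W dx}$ and $e^{+\int W dx}$ cannot both be normalizable on $L^2(\mathbb{R})$, and count the $2^{n-1}$ independent blocks. Your opening step (hermiticity together with $Q_{\bm{a}}^2=H$ showing that the zero-energy eigenspace coincides with the kernel of any single supercharge, e.g. $Q_{(0,\dots,0,1)}=\mathbb{I}_{2^{n-1}}\otimes\mathsf{Q}$) is a slightly more explicit justification of what the paper merely asserts, but it is a refinement rather than a different argument.
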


On the excited states, one may see the following:
\begin{proposition} \label{P4}
The excited states of $H$ in \eqref{MinimalSQM}  are $2^n$ fold degenerate.
\end{proposition}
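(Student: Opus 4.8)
The plan is to exploit the tensor-product form $H = \mathbb{I}_{2^{n-1}}\otimes \sH$ appearing in \eqref{MinimalSQM} and reduce the claim to the classical two-fold degeneracy of excited levels in the ordinary SQM Hamiltonian $\sH$. First I would observe that, since $H$ is $\mathbb{I}_{2^{n-1}}\otimes \sH$, every eigenspace of $H$ for an eigenvalue $E$ has the form $\mathbb{C}^{2^{n-1}}\otimes V_E$, where $V_E \subset L^2(\mathcal{I})\otimes\mathbb{C}^2$ is the eigenspace of the $2\times 2$ operator $\sH$ for the same value $E$. Hence the multiplicity of $E$ as an eigenvalue of $H$ equals $2^{n-1}\cdot\dim V_E$, and the whole problem reduces to showing that every strictly positive eigenvalue of $\sH$ satisfies $\dim V_E = 2$.

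To analyze $\sH$ I would use the block form \eqref{matrixSQM}, so that an eigenstate of $\sH$ with eigenvalue $E$ decomposes into a component governed by $A^\dagger A$ and one governed by $AA^\dagger$. For $E>0$ the intertwining relations $A(A^\dagger A) = (AA^\dagger)A$ and $A^\dagger(AA^\dagger) = (A^\dagger A)A^\dagger$ show that $A$ maps $\ker(A^\dagger A - E)$ into $\ker(AA^\dagger - E)$ and $A^\dagger$ maps it back. Since $\|A\psi\|^2 = \langle\psi, A^\dagger A\psi\rangle = E\|\psi\|^2$ and similarly $\|A^\dagger\varphi\|^2 = E\|\varphi\|^2$ on these spaces, both maps are injective for $E>0$ and are mutually inverse up to the scalar $E$; hence they are isomorphisms. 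Consequently the two sectors contribute equal multiplicities, and $\dim V_E$ is twice the multiplicity of $E$ in either sector.

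To pin the count to exactly two I would note that $A^\dagger A = \tfrac{1}{2}(p^2 + W^2 - W')$ and $AA^\dagger = \tfrac{1}{2}(p^2 + W^2 + W')$ are ordinary one-dimensional Schr\"odinger operators, whose normalizable bound states are non-degenerate (the standard fact underlying SQM, see \cite{CoKhSu,Junker}). Thus each sector contributes exactly one normalizable state at a given excited level, giving $\dim V_E = 2$ and therefore multiplicity $2\cdot 2^{n-1} = 2^n$ for $H$, as claimed. The only genuinely delicate step is this last one: the isomorphism argument by itself yields only that excited states occur in matched pairs, so it establishes degeneracy \emph{at least} $2^{n-1}$ times two; the \emph{exact} value $2^n$ relies on non-degeneracy of one-dimensional bound states, and for potentials with a continuous spectrum the statement should be read as applying to the discrete excited levels.
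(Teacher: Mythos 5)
Your proof is correct, but it takes a genuinely different route from the paper's. You exploit the factorized form $H=\mathbb{I}_{2^{n-1}}\otimes\sH$ from \eqref{MinimalSQM} and reduce the claim to the two-fold degeneracy of positive levels of the ordinary SQM Hamiltonian $\sH$ in \eqref{matrixSQM} (intertwining by $A$, $A^{\dagger}$, plus non-degeneracy of one-dimensional bound states), giving multiplicity $2^{n-1}\cdot 2=2^n$. The paper instead argues via the symmetry algebra: it fixes a reference eigenstate $\psi_{\bm{0}}=(\psi(x),0,\dots,0)^T$ and produces degenerate partners by acting with products of supercharges; since $Q_{\bm{a}}Q_{\bm{b}}\sim Z_{\bm{a}\bm{b}}$, only supercharges need be considered, and since $X_{10\dots 0}\sim\gamma_1,\dots,X_{0\dots 010}\sim\gamma_{n-1}$ generate a Clifford algebra while $X_{0\dots 01}\sim\mathbb{I}_{2^{n-1}}$, the algebra generated by all supercharges contains exactly $2^{n-1}\cdot 2=2^n$ linearly independent operators, whose action on $\psi_{\bm{0}}$ yields the $2^n$ degenerate states. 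The two arguments are complementary. Yours pins down the exact multiplicity with minimal machinery, and you correctly identify where "exactly $2^n$" (as opposed to "at least $2^n$") comes from: taken literally, the paper's operator counting only bounds the degeneracy from below, and matching that count to the full eigenspace dimension implicitly uses precisely the fact you prove, namely $\dim V_E=2$ for the discrete spectrum of $\sH$. Conversely, the paper's construction exhibits the eigenspace as the orbit of a single state under the supercharges, i.e., as a multiplet of $\g$; this module structure is what gets reused in Proposition \ref{P7} to discuss (ir)reducibility of the eigenspaces in the $Cl(2n)$ model, and it ties the degeneracy to the $\Zn$-graded symmetry itself rather than to the accidental tensor factorization of $H$, which your argument does not display.
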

\begin{proof} 
Let $ \bm{0} = (0,0,\dots,0) \in \Zn $ and 
\begin{equation}
   H \psi_{\bm{0}}(x) = E \psi_{\bm{0}}(x), \quad 
   E > 0, \quad \psi_{\bm{0}}(x) \in \mathscr{H}_{\bm{0}}
\end{equation}
where $ \psi_{\bm{0}}(x)$ is the $2^n$ components vector of the form
$
  \psi_{\bm{0}}(x) = (\psi(x), 0, \dots, 0)^T.
$ 
It follows from $ [H, Q_{\bm{a}}] = [H, Z_{\bm{a}\bm{b}}] = 0$ that the states $ Q_{\bm{a}} \psi_{\bm{0}} \in \mathscr{H}_{\bm{a}} $ and $ Z_{\bm{a}\bm{b}} \psi_{\bm{0}} \in \mathscr{H}_{\bm{a}+\bm{b}} $ are also eigenfunctions of $H$ of the same energy $E$. 

From \eqref{MinimalSQM} $ Q_{\bm{a}} Q_{\bm{b}} \sim Z_{\bm{a}\bm{b}}$ which implies that it is enough to consider the action of supercharges on $\psi_{\bm{0}}$ to study the degenerate states. 
The operator  $ X_{\bm{a}}$ in \eqref{MinimalX} is a product of gamma matrices, especially
\begin{equation}
 X_{10\dots 0} \sim \gamma_1, \quad
 X_{010\dots 0} \sim \gamma_2, \quad
 X_{0010\dots 0} \sim \gamma_3, \ \dots, \ 
 X_{0\dots 010} \sim \gamma_{n-1}.
\end{equation}
These are a set of generators of the Clifford algebra $Cl(n-1)$ (note that any $\tg{j}$ does not appear in $X_{\bm{a}}$). 
Note also that $ X_{0\dots 01} \sim \mathbb{I}_{2^{n-1}}. $ 
It follows that the set of supercharges $ \Lambda := \{ \; Q_{10\dots 0}, Q_{010\dots 0},\, \dots, \,  Q_{0\dots 010} \; \}$ together with $ Q_{0\dots 01}  $  generate all supercharges and their product. 
A product of an odd number of these supercharge gives an another supercharge and a product of an even number of them gives a central element. 
The number of independent operators generated by $\Lambda $ is $ 2^{n-1}. $ 
These operators produce, by the multiplication of $ Q_{0\dots 01}, $ another $ 2^{n-1}$ independent operators.  
Therefore total number of the linearly independent operators obtained as a product of supercharges is $ 2^n. $ 
These operators produce degenerate states when acting on $\psi_{\bm{0}}(x) $ so that the excited states are $ 2^n$ fold degenerate. 
\end{proof}


As is already pointed out, in this model of $\Zn$-graded SQM all the central elements in  each parity even subspace $ \g_{\bm{a}}$ are realized as an essentially the same operator. 
In order to have models (representations of $\g$) which have no or less degeneracy of the central elements, we will use the Clifford algebras of larger dimension in the subsequent sections.

%
\setcounter{equation}{0}
\section{$ 2^{n+1}$ dimensional $\Zn$-graded SQM by $Cl(2n)$} 
\label{SEC:next}

In this section a $ 2^{n+1}$ dimensional $\Zn$-graded SQM is constructed by the use of Clifford algebra $ Cl(2n).$ 
The gamma matrices \eqref{Clrep} of $ Cl(2n)$ are $2^n$ dimension. 
We write $ \bm{1} = (1,1,\dots,1) \in \Zn$ and we remark that $ Q_{\bm{1}}$ exists only 
when $n$ is odd.  

\begin{proposition} \label{P5}
Define the following $2^{n+1}$ dimensional matrix differential operators:
\begin{align}
   H &= \mathbb{I}_{2^n} \otimes \mathsf{H}, 
   \qquad
   Q_{\bm{a}} = Y_{\bm{a}} \otimes \mathsf{Q},
   \qquad
   Z_{\bm{a}\bm{b}} = (-i)^{1-\bm{a}\cdot \bm{b}} Y_{\bm{a}} Y_{\bm{b}} \otimes \mathsf{H},
   \label{2ndSmallest} \\
   Q_{\bm{1}} &= \mathbb{I}_{2^n} \otimes \mathsf{Q},
   \qquad
   Z_{\bm{a}\bm{1}} = Z_{\bm{1}\bm{a}} = Y_{\bm{a}} \otimes \mathsf{H}
   \label{DisappearEven}
\end{align}
where $ \bm{a}, \bm{b} \neq \bm{1}, \; \bm{a} \neq \bm{b} $ and 
\begin{equation}
   Y_{\bm{a}} = i^{h(\bm{a})} \gamma_1^{a_1} \gamma_2^{a_2} \cdots \gamma_n^{a_n},
   \qquad
   h(\bm{a}) = \sum_{j=1}^{n-1} \sum_{k=j+1}^n a_j a_k. 
   \label{Ydef}
\end{equation}
Then the matrix operators realize the relations \eqref{DefRels} of $\g.$ 
Thus they give a $2^{n+1}$ dimensional $\Zn$-graded SQM. 
\end{proposition}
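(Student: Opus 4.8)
The plan is to follow the strategy of the proof of Proposition \ref{P2}, while exploiting the fact that the present construction has a far more uniform tensor structure: every supercharge is $Q_{\bm a}=Y_{\bm a}\otimes\sQ$ (with the single exception $Q_{\bm 1}=\mathbb{I}_{2^n}\otimes\sQ$), so no case distinction on $a_n$ is needed and the four-way split \eqref{Xcommute} collapses to one clean rule. First I would record the algebraic properties of the $Y_{\bm a}$ in \eqref{Ydef}. As in the minimal model, the phase $i^{h(\bm a)}$ is chosen precisely so that $Y_{\bm a}^{\dagger}=Y_{\bm a}$ and $Y_{\bm a}^2=\mathbb{I}_{2^n}$; this follows from $\gamma_j^{\dagger}=\gamma_j$, $\gamma_j^2=\mathbb{I}_{2^n}$, and the fact that reversing a product of $w$ distinct $\gamma$'s produces the sign $(-1)^{w(w-1)/2}=(-1)^{h(\bm a)}$, where $w$ is the weight of $\bm a$ and $h(\bm a)=\binom{w}{2}$. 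The key new ingredient is the single commutation rule $Y_{\bm a}Y_{\bm b}=(-1)^{1-\bm a\cdot\bm b}Y_{\bm b}Y_{\bm a}$, valid whenever $\bm a$ and $\bm b$ both have parity $1$: it is obtained by counting the anticommuting transpositions, which number $w(\bm a)w(\bm b)-\sum_j a_jb_j\equiv 1-\bm a\cdot\bm b\pmod 2$ once one uses $w(\bm a)\equiv w(\bm b)\equiv 1$.

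With these facts the defining relations \eqref{DefRels} reduce to bookkeeping in the two tensor factors. Because $\sQ^2=\sH$ and $[\sH,\sQ]=0$ by \eqref{StandSQM}, one reads off $Q_{\bm a}^2=Y_{\bm a}^2\otimes\sQ^2=H$ and $[H,Q_{\bm a}]=Y_{\bm a}\otimes[\sH,\sQ]=0$ at once. For $\bm a\neq\bm b$ with $\bm a,\bm b\neq\bm 1$ I would compute $\llbracket Q_{\bm a},Q_{\bm b}\rrbracket=(Y_{\bm a}Y_{\bm b}-(-1)^{\bm a\cdot\bm b}Y_{\bm b}Y_{\bm a})\otimes\sH$ and apply the commutation rule to obtain $2Y_{\bm a}Y_{\bm b}\otimes\sH$, which equals $2i^{1-\bm a\cdot\bm b}Z_{\bm a\bm b}$ because $i^{1-\bm a\cdot\bm b}(-i)^{1-\bm a\cdot\bm b}=1$; this is exactly \eqref{DefRels}.

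Centrality is where the uniform structure pays off. Every central element in \eqref{2ndSmallest} and \eqref{DisappearEven} has the form $M\otimes\sH$ with $M$ a product of $\gamma$'s, and $\sH$ commutes with both $\sQ$ and $\sH$; consequently, for any homogeneous generator $P\otimes S$ of degree $\bm v$, the graded bracket factorizes as $\llbracket M\otimes\sH,\,P\otimes S\rrbracket=(MP-(-1)^{\bm u\cdot\bm v}PM)\otimes\sH S$, where $\bm u$ is the degree of the central element. The whole question thus reduces to a graded bracket of $\gamma$-products. Taking $P=Y_{\bm c}$ and moving $Y_{\bm c}$ through $Y_{\bm a}Y_{\bm b}$ with the commutation rule produces precisely the factor $(-1)^{(\bm a+\bm b)\cdot\bm c}$, so $\llbracket Z_{\bm a\bm b},Q_{\bm c}\rrbracket=0$. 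The brackets $\llbracket Z,Z'\rrbracket$ and $\llbracket Z,H\rrbracket$ vanish by the same factorization, the first factors now being even products of $\gamma$'s whose commutation sign is again dictated by the inner product of their degrees.

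I expect the main points requiring care to be the distinguished element $Q_{\bm 1}=\mathbb{I}_{2^n}\otimes\sQ$, present only for odd $n$ and built from $\mathbb{I}_{2^n}$ rather than $Y_{\bm 1}$, together with the associated $Z_{\bm a\bm 1}=Y_{\bm a}\otimes\sH$. These must be checked by hand: using $\bm a\cdot\bm 1=\sum_j a_j\equiv p(\bm a)=1$ one finds $\llbracket Q_{\bm a},Q_{\bm 1}\rrbracket=2Y_{\bm a}\otimes\sH=2Z_{\bm a\bm 1}$ (the degree pairing forces the anticommutator-type combination that adds rather than cancels), and $(\bm a+\bm 1)\cdot\bm c\equiv 1+\bm a\cdot\bm c$ yields the centrality of $Z_{\bm a\bm 1}$. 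The genuinely delicate step, however, is the sign accounting in the master commutation rule, since the phase $i^{h(\bm a)}$ must reconcile hermiticity, idempotency, and the commutation sign simultaneously; once that single identity is correct, every relation in \eqref{DefRels} follows mechanically, including the antisymmetry $Z_{\bm a\bm b}=-(-1)^{\bm a\cdot\bm b}Z_{\bm b\bm a}$ as a consistency check.
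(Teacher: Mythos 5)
Your proposal is correct, and its skeleton matches the paper's: establish hermiticity and idempotency of the $Y_{\bm a}$, derive a commutation rule for pairs $Y_{\bm a},Y_{\bm b}$ governed by $\bm a\cdot\bm b$, read off $\llbracket Q_{\bm a},Q_{\bm b}\rrbracket=2Y_{\bm a}Y_{\bm b}\otimes\sH$, and treat $Q_{\bm 1}$, $Z_{\bm a\bm 1}$ separately for odd $n$. Your key rule $Y_{\bm a}Y_{\bm b}=(-1)^{1-\bm a\cdot\bm b}Y_{\bm b}Y_{\bm a}$ is exactly the paper's \eqref{Ycom} packaged as a single sign identity, and your weight-counting derivation ($w(\bm a)w(\bm b)-\sum_j a_jb_j\equiv 1-\bm a\cdot\bm b$ for parity-$1$ degrees) is the same parity argument the paper states in words. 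Where you genuinely diverge is the centrality step: the paper splits into the cases $(\bm a+\bm b)\cdot\bm c=0$ and $(\bm a+\bm b)\cdot\bm c=1$, each with subcases on $\bm a\cdot\bm c$, and disposes of them by recycling the commutator/anticommutator expansions \eqref{Com2Anti}, \eqref{Com2Com}, \eqref{AntiExpand1}, \eqref{AntiExpand2} from Proposition \ref{P2}; you instead transport $Y_{\bm c}$ through the product $Y_{\bm a}Y_{\bm b}$, observe that the accumulated sign is precisely $(-1)^{(\bm a+\bm b)\cdot\bm c}$, i.e.\ the sign built into the graded bracket, and conclude that every such bracket vanishes identically. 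This buys you uniformity: one computation covers $\llbracket Z,Q\rrbracket$, $\llbracket Z,Z'\rrbracket$, and $\llbracket Z,H\rrbracket$ at once with no case analysis (the paper only checks brackets with supercharges and lets the rest follow), and the same transport argument cleanly yields the centrality of $Z_{\bm a\bm 1}$ via $(\bm a+\bm 1)\cdot\bm c=\bm a\cdot\bm c+1$, which the paper handles by a separate parity-flip remark. The trade-off is minor: the paper's case-by-case route makes the reuse of the Proposition \ref{P2} machinery explicit, whereas your version is self-contained and would generalize verbatim to the maximal model of Proposition \ref{P8}.
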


Note that the operators in \eqref{DisappearEven} do not exist for even $n.$

\begin{proof}
The proof is rather simple. First, note that
\begin{equation}
   Y_{\bm{a}}^{\dagger} = Y_{\bm{a}}, \qquad (Y_{\bm{a}})^2 = \mathbb{I}_{2^n}.
\end{equation}
It follows immediately that $ Q_{\bm{a}}^2 = Q_{\bm{1}}^2 = \mathbb{I}_{2^n} \otimes \mathsf{H} = H.$ 
All $Y_{\bm{a}}$ is a product of odd number of $\gamma$'s. 
If $ \bm{a}\cdot \bm{b} = 0$ (resp. 1), then the number of gamma matrices common to $ Y_{\bm{a}}$ and $ Y_{\bm{b}} $ is even (resp. odd). 
Thus we have
\begin{equation}
     \bm{a}\cdot \bm{b} = 0 \ \Rightarrow \ \{ Y_{\bm{a}}, Y_{\bm{b}} \} = 0,
     \qquad
     \bm{a}\cdot \bm{b} = 1 \ \Rightarrow \ [ Y_{\bm{a}}, Y_{\bm{b}} ] = 0.
     \label{Ycom}
\end{equation}
It then follows 
\begin{equation}
  \llbracket Q_{\bm{a}}, Q_{\bm{b}} \rrbracket = \llbracket Y_{\bm{a}}, Y_{\bm{b}} \rrbracket \otimes \mathsf{H}
  = 2 Y_{\bm{a}} Y_{\bm{b}} \otimes \mathsf{H} = 2i^{1-\bm{a}\cdot\bm{b}} Z_{\bm{a}\bm{b}}.
\end{equation}
When $n$ is odd, because of $ \bm{a}\cdot \bm{1} = 1$
\begin{equation}
   \llbracket Q_{\bm{a}}, Q_{\bm{1}} \rrbracket = \{ Q_{\bm{a}}, Q_{\bm{1}} \} = 2Y_{\bm{a}} \otimes \mathsf{H}
   =2 Z_{\bm{a}\bm{1}}.
\end{equation}

Next, we show that  $ Z_{\bm{a}\bm{b}} $ and $ Z_{\bm{a}\bm{1}} $ have vanishing $\Zn$-graded bracket with all supercharges which implies the centrality of $Z$'s.  
The bracket with $ Q_{\bm{1}}$ is computed easily:
\begin{equation}
  \llbracket Z_{\bm{a}\bm{b}}, Q_{\bm{1}} \rrbracket = [ Z_{\bm{a}\bm{b}}, Q_{\bm{1}} ]  \sim [Y_{\bm{a}}Y_{\bm{b}}, \mathbb{I}_{2^n} ] \otimes \mathsf{H} \mathsf{Q} = 0
\end{equation}
and we see that $ \llbracket Z_{\bm{a}\bm{1}}, Q_{\bm{1}} \rrbracket = [ Z_{\bm{a}\bm{1}}, Q_{\bm{1}} ] = 0 $ in a similar way. 
The bracket with $ Q_{\bm{c}}, \; \bm{c} \neq \bm{1}$ is computed  as follows: 
It is also easy to see from \eqref{Ycom} that 
$ \llbracket Z_{\bm{a}\bm{1}}, Q_{\bm{b}} \rrbracket = \llbracket Y_{\bm{a}}, Y_{\bm{b}} \rrbracket \otimes \mathsf{HQ} = 0, $ since $ (\bm{a} + \bm{1}) \cdot \bm{b} = \bm{a} \cdot \bm{b} + 1$ implies that if $ (\bm{a} + \bm{1}) \cdot \bm{b} = 0$ (resp. 1), then $  \bm{a} \cdot \bm{b} = 1$ (resp. 0).  
For $ \bm{a},\bm{b} \neq \bm{1},$ 
$ \llbracket Z_{\bm{a}\bm{b}}, Q_{\bm{c}} \rrbracket = \llbracket Y_{\bm{a}} Y_{\bm{b}}, Y_{\bm{c}} \rrbracket \otimes \mathsf{HQ} = 0 $ is proved also by \eqref{Ycom}. 
When $ (\bm{a}+\bm{b})\cdot \bm{c} = 0, $ we have  $ \bm{a}\cdot\bm{c} = \bm{b}\cdot\bm{c} $ 
and $ \llbracket Y_{\bm{a}} Y_{\bm{b}}, Y_{\bm{c}} \rrbracket = [Y_{\bm{a}} Y_{\bm{b}}, Y_{\bm{c}}   ]. $ 
If $ \bm{a}\cdot\bm{c} = 0,$ then we expand the commutator into anticommutators  as \eqref{Com2Anti}. 
If $ \bm{a}\cdot\bm{c} = 1,$ then we employ the expansion like \eqref{Com2Com}. 
This shows that $ [Y_{\bm{a}} Y_{\bm{b}}, Y_{\bm{c}}   ] =0. $ 
When $ (\bm{a}+\bm{b})\cdot \bm{c} = 1, $ we have  $ \bm{a}\cdot\bm{c} \neq \bm{b}\cdot\bm{c} $ 
and $ \llbracket Y_{\bm{a}} Y_{\bm{b}}, Y_{\bm{c}} \rrbracket = \{ Y_{\bm{a}} Y_{\bm{b}}, Y_{\bm{c}} \}. $ 
Expansion such as \eqref{AntiExpand1} and \eqref{AntiExpand2} shows $ \{ Y_{\bm{a}} Y_{\bm{b}}, Y_{\bm{c}} \} =0. $ 
Thus the centrality of $ Z_{\bm{a}\bm{b}}$ and $ Z_{\bm{a}\bm{1}} $ has been proved and this completes the proof. 
\end{proof}

\begin{remark} \label{R2}
 For even $n$, each subspace of $\g$ spanned by the central elements consists only of the central elements in \eqref{2ndSmallest}, i.e., $ Z_{\bm{a}\bm{b}} $ with $\bm{a}, \bm{b} \neq \bm{1}.$ 
While, for odd $n,$ each subspace spanned by the central elements consist of the central elements in both \eqref{2ndSmallest} and \eqref{DisappearEven}, i.e., $ Z_{\bm{a}\bm{b}} $ and $Z_{\bm{a}\bm{1}}.$
\end{remark}

\begin{remark} \label{R3}
 By the same way as Remark \ref{R1}, one may see that if $ \bm{a}+\bm{b} = \bm{c}+\bm{d} $ and none of $ \bm{a}, \bm{b}, \bm{c}, \bm{d}$ are equal to $\bm{1},$ then $ Z_{\bm{a}\bm{b}} \sim Z_{\bm{c}\bm{d}}. $ 
While even if  $ \bm{a}+\bm{b} = \bm{c} + \bm{1} $ $ Z_{\bm{a}\bm{b}} $ and $ Z_{\bm{c}\bm{1}}$ are linearly independent, since  $ Z_{\bm{a}\bm{b}} \sim \gamma_1^{a_1+b_1} \cdots \gamma_n^{a_n+b_n} \otimes \mathsf{H} $ and $ Z_{\bm{c}\bm{1}} \sim \gamma_1^{c_1} \cdots \gamma_n^{c_n} \otimes \mathsf{H} $ and $ a_n + b_n = c_n+1. $ 
\end{remark}

From Remarks \ref{R2}, \ref{R3} we see the followings: 
in the model of $\Zn$-graded SQM of Proposition \ref{P5} with odd $n,$ the degeneracy of central elements in each subspaces is resolved partially. However, the situation is the same as Proposition \ref{P2} for even $n,$ namely, all central elements in each subspace are realized as an essentially same operator.

With this knowledge of the central element, we now investigate the spectrum of $H$ in \eqref{2ndSmallest}.  
The Hilbert space of this model is taken to be $ \mathscr{H} = L^2(\mathbb{R}) \otimes \mathbb{C}^{2^{n+1}}. $ 
We shall see that the representation of $\g$ in Proposition \ref{P5} is not irreducible for even $n$ but irreducible for odd $n.$ 
\begin{proposition} \label{P7}
The zero energy state of $H$ in \eqref{2ndSmallest} is  either non-existent or $ 2^n$ fold degenerate. 
While the excited states are $2^{n+1}$ fold degenerate. 
If $n$ is even (resp. odd) then the  eigenspace $ V_E$ for a fixed energy $E>0$ is a reducible (resp. irreducible) space as a representation space of $\g.$ 
When $n$ is even, the eigenspace $V_E$ is a direct sum of two irreducible spaces each of which is $2^n$ dimensional. 
\end{proposition}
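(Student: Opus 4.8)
The plan is to reduce everything to the tensor factorization $H=\mathbb I_{2^n}\otimes\mathsf{H}$ and to the action of $\g$ on a single energy eigenspace. First, since $H=\mathbb I_{2^n}\otimes\mathsf{H}$, the spectrum of $H$ coincides with that of the ordinary SQM Hamiltonian $\mathsf{H}$, each multiplicity being multiplied by $2^n$. By the standard SQM argument recalled around \eqref{GroundCond}, a normalizable zero mode of $\mathsf{H}$ is unique when it exists, while every level $E>0$ is two-fold, the two states being exchanged by $\mathsf{Q}$. Tensoring with $\mathbb I_{2^n}$ then shows at once that the zero-energy space of $H$ is either absent or $2^n$-fold degenerate, and that each eigenspace $V_E$ with $E>0$ has dimension $2^{n+1}$.

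Fix $E>0$. On the two-dimensional $E$-eigenspace of $\mathsf{H}$, spanned by the $\sigma_3$-eigenvectors $\phi_+$ and $\phi_-$ (the purely upper and purely lower eigenfunctions), one has $\mathsf{H}=E\,\mathbb I_2$ and $\mathsf{Q}=\sqrt{E}\,\sigma_1$ in the basis $\{\phi_+,\phi_-\}$. Thus on $V_E\cong\mathbb C^{2^n}\otimes\mathbb C^2$ each generator of $\g$ acts, up to the nonzero scalars $E$ and $\sqrt{E}$, as a product of gamma matrices tensored with $\sigma_1$ (the supercharges) or with $\mathbb I_2$ (the $Z$'s and $H$). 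I would then determine the associative algebra $\mathcal A$ generated by these operators, using $Cl(2n)\cong M_{2^n}(\mathbb C)$, so that products of the $\gamma$'s exhaust $M_{2^n}(\mathbb C)$, with $Y_{\bm a}Y_{\bm b}$ the even products and $Y_{\bm a}$ the odd ones. For even $n$ neither $Q_{\bm 1}$ nor $Z_{\bm a\bm 1}$ is present, so only even products appear tensored with $\mathbb I_2$ and odd products tensored with $\sigma_1$; hence $\mathcal A$ is a proper subalgebra. For odd $n$ the additional operators $Q_{\bm 1}=\mathbb I_{2^n}\otimes\mathsf{Q}$ and $Z_{\bm a\bm 1}=Y_{\bm a}\otimes\mathsf{H}$ contribute odd products tensored with $\mathbb I_2$ and $\mathbb I_{2^n}\otimes\sigma_1$, enlarging $\mathcal A$ to $M_{2^n}(\mathbb C)\otimes\langle\mathbb I_2,\sigma_1\rangle\cong M_{2^n}(\mathbb C)\oplus M_{2^n}(\mathbb C)$.

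Next I would read off the invariant subspaces and test their gradedness. Writing $\mathbb C^{2^n}=S^+\oplus S^-$ for the chirality eigenspaces of $\omega=i^n\gamma_1\tilde{\gamma}_1\cdots\gamma_n\tilde{\gamma}_n$, for even $n$ the algebra $\mathcal A$ preserves $W_0=(S^+\otimes\phi_+)\oplus(S^-\otimes\phi_-)$ and $W_1=(S^-\otimes\phi_+)\oplus(S^+\otimes\phi_-)$, each an irreducible Clifford module of dimension $2^n$, so $V_E=W_0\oplus W_1$. For odd $n$, because $\mathcal A\cong M_{2^n}(\mathbb C)\oplus M_{2^n}(\mathbb C)$ with the two blocks separated by the central element $\mathbb I_{2^n}\otimes\sigma_1$, the only proper invariant subspaces are the $\sigma_1$-eigenspaces $\mathbb C^{2^n}\otimes\psi_\pm$ with $\psi_\pm=\tfrac{1}{\sqrt2}(\phi_+\pm\phi_-)$. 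The grading is fixed by requiring each generator to carry its prescribed degree; propagating from $v_{\bm 0}\otimes\phi_+$ (assigned degree $\bm 0$) gives $\deg(v_{\bm\gamma}\otimes\phi_s)=\bm\gamma+(p(\bm\gamma)+s)\bm 1$, equivalently the grade operators $G_k=\big[(i\gamma_k\tilde{\gamma}_k)\,\omega\big]\otimes\sigma_3$, which one checks reproduce every degree, including $\deg Q_{\bm 1}=\bm 1$. Since $G_k$ carries the factor $\sigma_3$, it fixes $W_0$ and $W_1$ (built from the $\sigma_3$-eigenvectors $\phi_\pm$) but interchanges $\psi_+$ and $\psi_-$. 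Hence for even $n$ the splitting $V_E=W_0\oplus W_1$ is into graded subspaces, so $V_E$ is reducible as two $2^n$-dimensional irreducibles, while for odd $n$ the subspaces $\mathbb C^{2^n}\otimes\psi_\pm$ are not graded---$v_{\bm\gamma}\otimes\psi_\pm$ has homogeneous parts $v_{\bm\gamma}\otimes\phi_\pm$ of degrees differing by $\bm 1\neq\bm 0$---so $V_E$ is irreducible.

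The main obstacle is this last step. One must first recognize that ``irreducible'' here can only mean graded-irreducible: for odd $n$ the spaces $\mathbb C^{2^n}\otimes\psi_\pm$ are genuine $\g$-invariant subspaces, so $V_E$ is reducible as a plain module and the even/odd dichotomy is invisible without the grading. The real work is then to pin down the $\Zn$-grading precisely enough to decide the gradedness of each invariant subspace, the decisive point being that $\deg Q_{\bm 1}=\bm 1$ forces the SQM factor $\sigma_3$ into the grade operators for odd $n$; once this is in hand, everything else is routine bookkeeping with even and odd products of gamma matrices.
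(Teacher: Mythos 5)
Your reduction to the eigenspace $V_E\cong\mathbb{C}^{2^n}\otimes\mathbb{C}^2$, your identification of the even-$n$ splitting $V_E=W_0\oplus W_1$ (which coincides with the paper's $V_E^{(1)}\oplus V_E^{(2)}$), and your insistence that ``irreducible'' can only mean \emph{graded}-irreducible are all sound. But the argument rests on a false algebraic claim: that ``products of the $\gamma$'s exhaust $M_{2^n}(\mathbb{C})$.'' The operators $Y_{\bm{a}}$ of \eqref{Ydef} involve only $\gamma_1,\dots,\gamma_n$ and never the $\tg{j}$, so they generate the subalgebra of $Cl(2n)$ isomorphic to $Cl(n)$, of dimension $2^n$, not $M_{2^n}(\mathbb{C})$ of dimension $4^n$ (the paper states exactly this: ``these matrices generate the Clifford algebra $Cl(n)$''). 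Hence for odd $n$ the algebra $\mathcal{A}$ is not $M_{2^n}(\mathbb{C})\oplus M_{2^n}(\mathbb{C})$ but $Cl(n)\otimes\langle\mathbb{I}_2,\sigma_1\rangle$, of dimension $2^{n+1}$, and your enumeration of its invariant subspaces collapses. The even-$n$ justification also fails as stated: an algebra of dimension $2^n$ cannot act irreducibly (in the ungraded sense) on a $2^n$-dimensional space, since Burnside would require dimension $4^n$. What is true, and what the paper's proof uses, is graded irreducibility of $W_0,W_1$: their homogeneous components are one-dimensional, and the supercharges, which are invertible on $V_E$ because $Q_{\bm{a}}^2=E\neq0$, permute the $2^n$ degrees transitively.

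The gap is fatal for the odd-$n$ half, because the correctly identified algebra contains a degree-$\bm{0}$ operator that your argument (and, one must note, the paper's own proof) overlooks. For odd $n$ the volume element $\gamma_1\gamma_2\cdots\gamma_n$ is central in $Cl(n)$, so the operator
\[
  P := Q_{\bm{e}_1} Q_{\bm{e}_2} \cdots Q_{\bm{e}_n}\, Q_{\bm{1}}
      \sim (\gamma_1\gamma_2\cdots\gamma_n)\otimes \mathsf{H}^{(n+1)/2},
  \qquad \bm{e}_j = (0,\dots,1,\dots,0)\in\Zn ,
\]
has $\Zn$-degree $\bm{e}_1+\cdots+\bm{e}_n+\bm{1}=\bm{1}+\bm{1}=\bm{0}$, commutes with every $Q_{\bm{a}}$, $Q_{\bm{1}}$, $Z_{\bm{a}\bm{b}}$, $Z_{\bm{a}\bm{1}}$ and $H$, and restricts on $V_E$ to $E^{(n+1)/2}(\gamma_1\cdots\gamma_n)\otimes\mathbb{I}$, which is unitary, traceless, and hence not a scalar. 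Its two eigenspaces are therefore proper, nonzero, graded, $\g$-invariant subspaces of $V_E$, each of dimension $2^n$; by the graded Schur lemma no argument can establish graded irreducibility of $V_E$ for odd $n\geq3$, because it is false. (You can check this concretely for $n=3$: $P=Q_{100}Q_{010}Q_{001}Q_{111}\sim\gamma_1\gamma_2\gamma_3\otimes\mathsf{H}^2$ splits the $16$-dimensional $V_E$ into two graded $8$-dimensional submodules.) The paper's own proof does not detect this either, since it only establishes that $V_E$ is cyclic over $\g$ — generated by $\psi_{\bm{0}}^{(1)}$ — and cyclic does not imply irreducible: $\psi_{\bm{0}}^{(1)}$ has nonzero components in both eigenspaces of $P$. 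So your proposal is not a proof; its even-$n$ part can be repaired by the graded counting argument above, but its odd-$n$ conclusion cannot be saved.
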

\begin{proof}
The zero energy states are defined by $ Q_{\bm{a}} \Psi_0(x) = Q_{\bm{1}} \Psi_0(x) = 0. $ 
Thus one may repeat the discussion same as Proposition \ref{P3} to prove the first part of the proposition.

To discuss the excited states, we pick up two specific excited states of the same energy from the Hilbert space $\mathscr{H}.$ They are given by $2^{n+1}$ components vectors as follows:
\begin{align}
  & \psi_{\bm{0}}^{(1)}(x) = (\psi(x), 0, \dots, 0)^T,
  \qquad
  \psi_{\bm{0}}^{(2)}(x) = (0, \varphi(x), 0, \dots, 0)^T,
  \\
  &H \psi_{\bm{0}}^{(a)} = E \psi_{\bm{0}}^{(a)}, \quad E >0
\end{align}
We assume that these vectors have $\Zn$ degree $ \bm{0} = (0,0,\dots,0) $ for the sake of simplicity (this choice is not essential to prove the Proposition \ref{P7}). 
If there exists an operator $A_{12}$ mapping $ \psi_{\bm{0}}^{(1)} $ to $\psi_{\bm{0}}^{(2)}$, then $A_{12}$ is realized in $\mathscr{H}$ by a $ 2^{n+1} \times 2^{n+1}$ matrix having a non-vanishing entry at $(2,1)$ position.  
On the other hand, an operator $ A_{21}$ mapping $ \psi_{\bm{0}}^{(2)} $ to $\psi_{\bm{0}}^{(1)}$ should be realized as a $ 2^{n+1} \times 2^{n+1}$ matrix having a non-vanishing entry at $(1,2)$ position. 

First, let us show that if $n$ is even, then the realization of $\g$ given by Proposition \ref{P7} does not have operators corresponding to $ A_{12}$ and $ A_{21}.$ 
The operators $ A_{12}$ and $ A_{21} $ are, if any, given by  a tensor product of an element of $Cl(2n)$ and a $ 2 \times 2$ matrix operator which is diagonal $(\mathsf{H}) $ or antidiagonal $(\mathsf{Q}, \mathsf{HQ}).$
Therefore, the possibility is unique. 
$ A_{12}$ and $ A_{21}$ should be a tensor product of an element of $ Cl(2n)$ having non-vanishing entry at $(1,1)$ position and an antidiagonal matrix operator.  

Recall our representation of $ Cl(2n)$ in \eqref{Clrep} which is given by a $n$ times tensor product of Pauli matrices so that to have a non-vanishing  $(1,1)$ entry it must be a tensor product of $\sigma_3 $ and $\mathbb{I}_2,$ i.e., a diagonal matrix. 
The operator $Y_{\bm{a}}$ in \eqref{2ndSmallest} is a product of $\gamma$'s and $ \tilde{\gamma}$'s do not appear so that any product of $ Y_{\bm{a}}$ can not be diagonal. 
Thus any product of $ Q_{\bm{a}}, Z_{\bm{a}\bm{b}}$ in \eqref{2ndSmallest} is not able to be $ A_{12}$ nor $ A_{21}. $  This proves the non-existent of $ A_{12}, A_{21}$ for even $n$. 
On the other hand, for odd $n$ one may take $ Q_{\bm{1}}$ as $ A_{12} $ or $ A_{21}$ since it it a tensor product of the identity matrix and $ \mathsf{Q}. $ 

With this observation in our mind, let us next consider the excited states for even $n$ where we have operators only in \eqref{2ndSmallest}. 
One may repeat the discussion similar to Proposition \ref{P4}. 
From \eqref{2ndSmallest} $ Q_{\bm{a}} Q_{\bm{b}} \sim Z_{\bm{a}\bm{b}} $ so that it is enough to consider the action of supercharges. 
By definition of $ Y_{\bm{a}}$ in \eqref{Ydef} 
\begin{equation}
  Y_{10\dots 0} \sim \gamma_1, \quad 
  Y_{010\dots 0} \sim \gamma_2, \ \dots, \  
  Y_{0\dots 01} \sim \gamma_n  
\end{equation}
and these matrices generate the Clifford algebra $Cl(n).$ 
This implies that the set of supercharges $ \Lambda' := \{ \; Q_{10\dots 0}, Q_{010\dots 0}, \dots, Q_{0\dots 01} \; \} $ generates all supercharges and their product.  
The number of independent operators generated by the set $ \Lambda'$ is $ 2^n.$ 
Action of the $ 2^n$ operators on $ \psi_{\bm{0}}^{(1)} $ produces a set of degenerate states which span an eigenspace $ V_E^{(1)} $  of $H$ with the energy $E.$ 
The action of the $ 2^n$ operators  on $ \psi_{\bm{0}}^{(2)} $ produces a set of another degenerate states which span the another eigenspace $ V_E^{(2)}$ of $H$ with the same energy $E.$ 
Thus the eigenspace with the energy $E$ is given by $ V_E = V_E^{(1)} \otimes V_E^{(2)} \subset \mathscr{H} $ and $ \dim V_E = 2^{n+1}$ which gives the degeneracy of the excited states. 

As shown above there is no operator connecting  $ \psi_{\bm{0}}^{(1)} $ and $ \psi_{\bm{0}}^{(2)} $ so that  each eigenspace $ V_E^{(1)}, V_E^{(2)} $ is invariant under the action of $\g.$  
By  repeated action of supercharges an element belonging to any subspace $ \mathscr{H}_{\bm{a}}$ is obtained. Thus the space $V_E^{(1)} $ and $V_E^{(2)} $ are irreducible.

Finally, we look at the case of odd $n.$ 
This case is rather simple, since we have the operators in \eqref{DisappearEven} in addition to those in \eqref{2ndSmallest}. $ Q_{\bm{1}}$ connects $ \psi_{\bm{0}}^{(1)} $ and $ \psi_{\bm{0}}^{(2)} $ so that one may consider only one of them, say, $ \psi_{\bm{0}}^{(1)}.$ 
The set of supercharges $ \Lambda'$ together with $ Q_{\bm{1}}$ generates all supercharges and their product. 
The total number of independent operators generated by $ \Lambda'$ and $ Q_{\bm{1}}$ is $ 2^{n+1}.$ 
These $ 2^{n+1}$ operators act on $ \psi_{\bm{0}}^{(1)} $ and generate $ 2^{n+1}$ degenerate states. 
Obviously, the eigenspace of $H$ with a fixed energy created in this way is irreducible representation space of $\g.$ 
This completes the proof.
 \end{proof}


%

\setcounter{equation}{0}
\section{Maximal dimensional $\Zn$-graded SQM by $Cl(2^n-2)$}
\label{SEC:max}

In \S \ref{SEC:Mini} and \S \ref{SEC:next}, we provided the models of $\Zn$-graded SQM where the central elements having the same $\Zn$ degree are not linearly independent. 
We discuss, in this section, a model of   $\Zn$-graded SQM where all the central element sharing the common $\Zn$ degree is linearly independent. 
Such a model is achieved if any product of supercharges maps $ \psi_{\bm{0}}(x) $ to an inequivalent state. 
The simplest way to construct such a model would be the use of $ Cl(2^n-2)$ since the number of the supercharges is equal to the number of $\gamma$'s plus identity. 

To build a model, we introduce an ordering of supercharges. 
First, we make a choice of ordering for parity odd ($ p(\bm{a})=1$) elements $ \bm{a} \in \Zn. $   
One may choose, for instance, the lexicographic ordering, reverse lexicographic one and so on. 
In fact, choice of the ordering is irrelevant for model building.  
Let us denote the ordered parity 1 elements of $ \Zn$ as
\begin{equation}
   \bm{a}_1, \ \bm{a}_2, \ \dots, \ \bm{a}_{M-1}, \ \bm{a}_{M}, \qquad M := 2^{n-1} 
   \label{aorder}
\end{equation} 
One should not confuse $ \bm{a}_i $ with the $i$th component $a_i$ of an element $ \bm{a} $ of $\Zn.$ 
The ordering \eqref{aorder} is reflected in an ordering of supercharges. 

\begin{proposition} \label{P8}
  Define $2^{M}$ dimensional matrix differential operators by
  \begin{align}
     H &= \mathbb{I}_{2^{M-1}} \otimes \mathsf{H}, 
     \nn \\
     Q_k &:= Q_{\bm{a}_k} = G_k \otimes \mathsf{Q},
     \nn \\
     Z_{k\ell} &:= Z_{\bm{a}_k \bm{a}_{\ell}} = (-i)^{1-\bm{a}_k \cdot \bm{a}_{\ell}} 
     G_k G_{\ell} \otimes \mathsf{H}		
     \label{MaxRep}
  \end{align}
  where $ k, \ell \in \{ 1, 2, \dots, M \}$ and 
  \begin{equation}
     G_1 := \gamma_1, \qquad
     G_m := \prod_{j=1}^{m-1} (i\gamma_j \tg{j})^{\bm{a}_j \cdot \bm{a}_m} \gamma_m ,
     \qquad 
     G_M := \prod_{j=1}^{M-1} (i\gamma_j \tg{j})^{1-\bm{a}_j \cdot \bm{a}_M}
     \label{MaxG}
  \end{equation}
  with $ m = 2, 3, \dots, M-1.$ 
  Then the matrix differential operators realize the relations \eqref{DefRels} of $\g.$ 
  Thus they give a $2^{M}$ dimensional $\Zn$-graded SQM. 
\end{proposition}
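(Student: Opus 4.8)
The plan is to verify the defining relations \eqref{DefRels} by reducing everything to commutation properties of the operators $G_k$, exactly as in Propositions \ref{P2} and \ref{P5}. Since $H = \mathbb{I}_{2^{M-1}} \otimes \sH$, $Q_k = G_k \otimes \sQ$ and $Z_{k\ell} \sim G_k G_\ell \otimes \sH$, and since $\sQ^2 = \sH$ commutes with $\sH$, the entire content of the proposition is encoded in three facts about the $G_k$: first, $G_k$ is hermitian and squares to the identity $\mathbb{I}_{2^{M-1}}$; second, $G_k$ and $G_\ell$ commute when $\bm{a}_k \cdot \bm{a}_\ell = 1$ and anticommute when $\bm{a}_k \cdot \bm{a}_\ell = 0$ (the analogue of \eqref{Ycom}); and third, the products $G_k G_\ell$ for distinct pairs with the same sum $\bm{a}_k + \bm{a}_\ell$ are now \emph{linearly independent}, which is the whole point of this maximal model. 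Granting the first two facts, the computation $\llbracket Q_k, Q_\ell\rrbracket = \llbracket G_k, G_\ell\rrbracket \otimes \sH = 2 G_k G_\ell \otimes \sH = 2i^{1-\bm{a}_k\cdot\bm{a}_\ell} Z_{k\ell}$ and the centrality of the $Z$'s follow verbatim from the expansions \eqref{Com2Anti}--\eqref{AntiExpand2} already used in the earlier proofs; I would simply cite those identities rather than redo them.

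So the real work is establishing the hermiticity, involutivity, and commutation sign rule for the specially constructed $G_k$ in \eqref{MaxG}. First I would record the basic building blocks from \eqref{maxrel1}--\eqref{maxrel2}: each $P_j := i\gamma_j\tg{j}$ is hermitian, idempotent, commutes with $P_k$ and with $\gamma_k$ for $j \neq k$, and anticommutes with its own $\gamma_j$. Hermiticity of $G_k$ is then immediate once one checks that the factors within a single $G_k$ pairwise commute (the $P_j$ all commute with each other, and $P_j$ commutes with $\gamma_m$ for $j < m$), so $G_k$ is a product of commuting hermitian idempotents times a single $\gamma_m$, giving $G_k^\dagger = G_k$ and $G_k^2 = \mathbb{I}$. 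I would handle $G_1 = \gamma_1$ and the closing factor $G_M$ (a pure product of $P_j$'s, hence diagonal) as the boundary cases.

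The crux is the commutation sign rule $G_k G_\ell = (-1)^{1+\bm{a}_k\cdot\bm{a}_\ell} G_\ell G_k$. Here the design of \eqref{MaxG} is engineered precisely so that the exponent $\bm{a}_j\cdot\bm{a}_m$ (resp. $1 - \bm{a}_j\cdot\bm{a}_M$) attached to each $P_j$ records how $G_m$ must be adjusted to produce the correct sign against the earlier $G_j$. Concretely, I would move $G_\ell$ past $G_k$ factor by factor: the only anticommuting collisions come from a bare $\gamma$ in one operator meeting the \emph{same} index $\gamma$ hidden inside a $P$-factor of the other, and the inserted exponents are exactly the bookkeeping that makes the total number of sign flips congruent to $1 + \bm{a}_k\cdot\bm{a}_\ell \pmod 2$. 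I anticipate this sign-counting is the main obstacle: it requires carefully tracking, for $k < \ell$, which $\gamma_k$ appears inside $G_\ell$ (through its factor $(i\gamma_k\tg{k})^{\bm{a}_k\cdot\bm{a}_\ell}$) versus the bare $\gamma_k$ in $G_k$, and verifying that summing these parity contributions over all shared indices yields the claimed result. The special role of $G_M$ (built so that $Q_M$ closes correctly against every other supercharge, including the parity matching that makes $\bm{a}_M$ the "complementary" degree) would need a separate but parallel count.

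Finally, to justify the word \emph{maximal} I would observe that linear independence of the $Z_{k\ell}$ within a fixed graded subspace reduces to linear independence of the matrices $G_k G_\ell$, which in turn follows because the map $(k,\ell) \mapsto G_k G_\ell$ sends distinct pairs to distinct (up to nonzero scalar) products of $\gamma$'s and $\tg{}$'s in the $2^{2^{n-1}-1}$-dimensional representation of $Cl(2^n-2)$; since that Clifford algebra has enough independent generators, no accidental coincidences of the type seen in Remark \ref{R1} occur. This independence is a consequence of, rather than a hypothesis for, the relations \eqref{DefRels}, so strictly speaking it belongs to the surrounding discussion rather than to the proof of the proposition itself, and I would relegate it to a remark. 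The proof proper ends once the three structural facts about $G_k$ are in hand and the bracket computations are cited from the earlier propositions.
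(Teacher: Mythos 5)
Your proposal is correct and takes essentially the same route as the paper's proof: reduce everything to hermiticity and idempotency of the $G_k$ plus the commutation sign rule (the paper's \eqref{Gcom}), derive the bracket relations and centrality by the same expansions used for Propositions \ref{P2} and \ref{P5}, treat $G_M$ as a separate case, and relegate linear independence of the $Z_{k\ell}$ to a remark (the paper's Remark \ref{R4}). The sign count you defer as the ``main obstacle'' is in fact simpler than you suggest: for $k<\ell$ the only possible anticommuting collisions are the bare $\gamma_k$ of $G_k$ against the factor $(i\gamma_k\tg{k})$, which occurs in $G_\ell$ precisely when its exponent is $1$, together with (when $\ell<M$) the bare $\gamma_k$ against the bare $\gamma_\ell$, and these give exactly the total $(-1)^{1+\bm{a}_k\cdot\bm{a}_\ell}$ you state.
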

\begin{proof}
The proof is almost same as Proposition \ref{P5}. 
 From \eqref{maxrel1} and \eqref{maxrel2} it is easy to see the relations
 \begin{equation}
     G_k^{\dagger} = G_k, \qquad G_k^2 = \mathbb{I}_{2^{M-1}}.
\end{equation}  
It follows immediately that $ Q_k^2 = H $ for any $k.$  

We have the relations similar to \eqref{Ycom}:
\begin{equation}
  \bm{a}_k \cdot \bm{a}_{\ell} = 0  \ \Rightarrow \  \{ G_k, G_{\ell} \} = 0,
  \qquad
  \bm{a}_k \cdot \bm{a}_{\ell} = 1  \ \Rightarrow \  [ G_k, G_{\ell} ] = 0. 
  \label{Gcom}
\end{equation}
To verify \eqref{Gcom}, one may assume that $ k < \ell $ without loss of generality. 
First, we consider  the case $ \bm{a}_k \cdot \bm{a}_{\ell} = 0. $
If $ \ell < M, $ then $ G_{\ell}$ does not have the factor $i \gamma_k \tg{k}.$ 
By definition, $ G_{\ell} $ does not have the factor $i\gamma_{\ell} \tg{\ell},$ either. 
It follows from $ \{ \gamma_k, \gamma_{\ell} \} = 0 $  that $\{G_k, G_{\ell} \} = 0.$ 
While, if $ \ell = M,$ then $ G_M $ has the factor $i \gamma_k \tg{k}.$ 
As $\gamma_k$ anticommutes with this factor, thus we have $\{G_k, G_M \} = 0.$ 
One may verify the case  $ \bm{a}_k \cdot \bm{a}_{\ell} = 1 $  in a similar way. 
Using \eqref{Gcom} the $\Zn$-graded bracket for supercharges given by \eqref{MaxRep} is computed as follows:   
\begin{equation}
 \llbracket Q_k, Q_{\ell} \rrbracket = \llbracket G_k, G_{\ell} \rrbracket \otimes \mathsf{H}  
  = 2 G_k G_{\ell} \otimes \mathsf{H}  
  = 2 i^{1-\bm{a}_k \cdot \bm{a}_{\ell}} Z_{k \ell}.
\end{equation}

The  centrality of $ Z_{k\ell} $ stems also from \eqref{Gcom}.   
By \eqref{MaxRep}, we have 
$ \llbracket Z_{k\ell}, Q_m \rrbracket\sim \llbracket G_k G_{\ell}, G_m \rrbracket \otimes \mathsf{H} \mathsf{Q}. $ 
The identity $ \llbracket G_k G_{\ell}, G_m \rrbracket = 0 $ is easily seen, by \eqref{Gcom},  in the same way as Proposition \ref{P5}. 
This completes the proof. 
\end{proof}

\begin{remark} \label{R4}
 All the central elements $ Z_{k\ell}$ are linearly independent. 
 This is readily seen noticing that $ Z_{k\ell} \sim Q_k Q_{\ell} \sim G_k G_{\ell} \otimes \mathsf{H}. $ 
Since $ i \gamma_j \tg{j}$ is a diagonal matrix in the representation \eqref{Clrep}, the position of non-zero entries in $ G_k G_{\ell} $ is determined by $ \gamma_k \gamma_{\ell}. $  
Thus  $ Z_{k\ell}$ can not share the position of non-zero entries with  $  Z_{ts} $ so that they are linearly independent.  
\end{remark}

 Now let us study the spectrum of $H$ in \eqref{MaxRep}. 
The Hilbert space of this model is taken to be $ \mathscr{H} = L^2(\mathbb{R}) \otimes \mathbb{C}^{2^M}.$  
The ground state $ \Psi_0(x)$ is defined by $ Q_{\bm{a}} \Psi_0(x) =0 $ for all possible $\bm{a}. $ 
Thus one may repeat the same discussion as Proposition \ref{P3} and see that the ground states are, if any, $2^{M-1}$ fold degenerate.  
The degeneracy of the excited states can be understood in a way similar to Propositions \ref{P4} and \ref{P7}. 
Namely, we consider the algebra whose elements are a product of supercharges. 
In the realization \eqref{MaxRep}, the set of all supercharges is a generator of the algebra (compare with Proposition \ref{P4} and \ref{P7} where a subset of supercharges generates the corresponding algebra). 
This is seen by noting that $ G_k \; (1 \leq k \leq M-1) $ is a product of a diagonal matrix and $\gamma_k,$ while $ G_M $ is a diagonal matrix. 
This implies that the set $ \{ \; G_1,G_2, \dots, G_{M-1} \; \} $ can be a set of generators of an algebra which is similar to  $Cl(M-1).$ 
The difference from $Cl(M-1)$ is the presence of commuting pairs of $G$'s.

Thus any supercharge should be taken as a generator. 
The number of supercharges is $ M $ so that the algebra generated by all supercharges is $ 2^M.$ 
Let $\psi_{\bm{0}}(x) = (\psi(x), 0, \dots, 0)^T \in \mathscr{H}_{\bm{0}} $ be an excited state wave function. 
Then the $ 2^M$ states
\begin{equation}
  \psi_{\bm{0}}, \quad Q_k \psi_{\bm{0}}, \quad Q_k Q_{\ell} \psi_{\bm{0}}, \quad Q_k Q_{\ell} Q_m \psi_{\bm{0}}, \quad \dots, \quad Q_1 Q_2 \cdots Q_M  \psi_{\bm{0}}
\end{equation}
exhaust all the linearly independent degenerate states obtained from $\psi_{\bm{0}}$ by  sequential action of the supercharges. 

Therefore, we have established the following:
\begin{proposition}
The zero energy state of $H$ in \eqref{MaxRep} is  either non-existent or $ 2^{M-1}$ fold degenerate. 
While the excited states are $ 2^M$ fold degenerate.  
\end{proposition}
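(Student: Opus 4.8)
The plan is to follow the template of Propositions \ref{P3}, \ref{P4} and \ref{P7}, treating the zero-energy sector and the excited sector separately and exploiting that the model \eqref{MaxRep} again has the uniform tensor form $Q_k = G_k \otimes \sQ$ with $G_k$ hermitian and $G_k^2 = \mathbb{I}_{2^{M-1}}$. Throughout I view $\mathscr{H} = \mathbb{C}^{2^{M-1}} \otimes \bigl(L^2(\mathbb{R}) \otimes \mathbb{C}^2\bigr)$, with $G_k$ acting on the first factor and $\sQ$ of \eqref{matrixSQM} on the second.

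For the zero-energy sector I would first note that, because $\sQ$ sits in a common right tensor slot, every ground-state condition collapses to a single one. Writing $Q_k = (G_k \otimes \mathbb{I})(\mathbb{I}_{2^{M-1}} \otimes \sQ)$ and using that $G_k$ is invertible, $Q_1 \Psi_0 = 0$ already forces $(\mathbb{I}_{2^{M-1}} \otimes \sQ)\Psi_0 = 0$, after which $Q_k \Psi_0 = (G_k \otimes \mathbb{I})(\mathbb{I}_{2^{M-1}}\otimes\sQ)\Psi_0 = 0$ holds automatically for every $k$. Decomposing $\Psi_0$ into its $2^{M-1}$ two-component blocks on which $\sQ$ acts, the surviving condition is exactly the pair \eqref{GroundCond}, $A\psi_i = 0$ and $A^{\dagger}\varphi_i = 0$. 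Repeating the non-simultaneous normalizability argument of Proposition \ref{P3}, at most one of the two families is normalizable, so the zero-energy state is either absent or $2^{M-1}$-fold degenerate.

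For the excited sector I would fix an excited eigenstate $\psi_{\bm{0}}(x) = (\psi(x),0,\dots,0)^T = e_1 \otimes \xi_0 \in \mathscr{H}_{\bm{0}}$ and, as in Propositions \ref{P4} and \ref{P7}, generate degenerate partners by acting with products of supercharges, all of which commute with $H$. Since $Q_k^2 = H$ and $Q_k Q_{\ell} \sim Z_{k\ell}$, a product with a repeated index reduces to a shorter product times powers of $H$, so the inequivalent operators are indexed by the $2^M$ subsets $S \subseteq \{1, \dots, M\}$, producing the candidate states $\psi_{\bm{0}}, Q_k\psi_{\bm{0}}, Q_kQ_{\ell}\psi_{\bm{0}}, \dots, Q_1Q_2\cdots Q_M\psi_{\bm{0}}$. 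Writing $Q_S = G_S \otimes \sQ^{|S|}$, each such state factorizes as $(G_S e_1) \otimes (\sQ^{|S|}\xi_0)$, so everything hinges on how the vectors $G_S e_1$ and the parity of $|S|$ conspire.

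The main obstacle is establishing that these $2^M$ states are genuinely linearly independent, and it is here that the support argument of Remark \ref{R4} must be sharpened, since the diagonal matrix $G_M$ cannot be separated from the identity by its nonzero-entry positions alone. I would proceed as follows. Each $G_k$ with $k < M$ is a signed monomial matrix with a fixed flip vector $f_k \in \mathbb{F}_2^{M-1}$ recording which tensor factors it acts on off-diagonally, while $G_M$ is diagonal and so has flip vector $0$; hence $G_S e_1$ is a single nonzero entry located at the position $\bigoplus_{k \in S,\, k<M} f_k$. Because the nested vectors $f_1, \dots, f_{M-1}$ are linearly independent over $\mathbb{F}_2$, this position determines $S \cap \{1, \dots, M-1\}$ uniquely. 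The remaining ambiguity, namely whether $M \in S$, is resolved by the factor $\sQ^{|S|}$: states with $|S|$ even and odd occupy orthogonal components of the final $\mathbb{C}^2$ slot, so the parity of $|S|$ together with the known size of $S \cap \{1, \dots, M-1\}$ recovers the membership of $M$. Thus the pair (spatial position, parity) determines $S$ completely, the $2^M$ states $Q_S \psi_{\bm{0}}$ are linearly independent eigenstates of $H$ of the common energy $E$, and the excited levels are $2^M$-fold degenerate, which completes the proof.
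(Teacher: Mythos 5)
Your proof is correct, and its skeleton is the same as the paper's: the zero-energy part is reduced to the block conditions \eqref{GroundCond} and the non-simultaneous-normalizability argument of Proposition \ref{P3}, and the excited part is obtained by acting on a reference eigenstate $\psi_{\bm{0}}$ with the $2^M$ subset-products $Q_S$ of supercharges, which suffice because $Q_k^2 = H$ and $Q_kQ_\ell = \pm Q_\ell Q_k$. The genuine difference is in how linear independence of the $2^M$ states is established, and there your argument is tighter than the paper's. The paper reasons at the level of operators: since $G_k$ $(1 \le k \le M-1)$ is a diagonal matrix times $\gamma_k$ while $G_M$ is diagonal, the $M$ supercharges generate a $2^M$-dimensional algebra ``similar to $Cl(M-1)$'' doubled by $G_M$, and it then asserts that these operators produce independent states; but linear independence of operators does not by itself guarantee linear independence of the vectors $Q_S\psi_{\bm{0}}$, so that step is really a sketch. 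Your flip-vector bookkeeping supplies the missing mechanism: the position of $G_S e_1$ inside $\mathbb{C}^{2^{M-1}}$ recovers $S \cap \{1,\dots,M-1\}$ because the vectors $f_1,\dots,f_{M-1}$ read off from \eqref{Clrep} form a triangular, hence $\mathbb{F}_2$-independent, family, and the even/odd component structure of $\sQ^{|S|}\xi_0$ in the final $\mathbb{C}^2$ slot then decides whether $M \in S$ --- exactly the point where the support argument of Remark \ref{R4} is insufficient, as you observe, because the diagonal $G_M$ has trivial flip vector. Two small points you should make explicit in a final write-up: first, $\sQ^{|S|}\xi_0 \neq 0$, since it equals $E^{|S|/2}\xi_0$ or $E^{(|S|-1)/2}(0, A\psi)^T$ and $A\psi \neq 0$ when $E>0$; second, as in the paper, concluding that the degeneracy is exactly $2^M$ (not merely at least $2^M$) tacitly uses the standard fact that each scalar problem $A^\dagger A\psi = E\psi$ contributes a one-dimensional normalizable solution space.
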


%
\setcounter{equation}{0}
\section{Models in between minimal and maximal dimensions}
\label{SEC:InBet}

In the previous sections, we provided three models of $ \Zn$-graded SQM. 
Depending the choice of the Clifford algebra $Cl(2m)$, the model has minimal $(2^n)$, next to the minimal $(2^{n+1})$ and maximal  $ (2^{2^{n-1}})$ dimensions. 
In Table \ref{TAB:Cl}, we indicate the Clifford algebras used in the previous sections   for construction of the models  for some small values of  $n.$

\begin{table}[h]
\begin{center}
\begin{tabular}{c|ccccc}
  $n$ & minimal & next & & &  maximal \\ \hline
   2 & $ Cl(2)$  & & & &  $Cl(2)$ \\[3pt]
   3 & $ Cl(4)$  & & & &  $Cl(6)$ \\[3pt]
   4 & $ Cl(6)$ & $Cl(8) $ & $ Cl(10)$ & $Cl(12)$ & $ Cl(14) $ \\[3pt]
   5 & $ Cl(8)$ & $Cl(10)$ & $\cdots $ & $Cl(28)$ & $ Cl(30)$ 
\end{tabular}
\end{center}
\caption{Clifford algebras used for the models} \label{TAB:Cl}
\end{table}

When $n=2, $ there is no distinction between models of the minimal and the maximal dimensions. 
For $n=3,$ there is no more Clifford algebras can  be used for model building between the minimal and the maximal dimensions. 
From $n=4, $ we see the existence of more Clifford algebras which have not been used in the models of the  minimal, next to the minimal and the maximal dimensions. 
The constructions in the previous sections suggest strongly that these Clifford algebras may be used to build  models of  various dimensions in between the minimal and the maximal. 
Proving it for arbitrary $n$ is highly non-trivial problem and the problem is still open. 
Here we prove it for $n=4$ by presenting such models explicitly. 
We also present two more models for $n=5$ in order to support the anticipation.

 The models presented in this section have the common structure. 
We order the parity 1 elements of $\Zn$ lexicographically:

\noindent
For $n = 4; $ 
\begin{alignat}{4}
 \bm{a}_1 &= (0,0,0,1), &\quad  \bm{a}_2 &=(0,0,1,0), & \quad \bm{a}_3 &= (0,1,0,0), & \quad \bm{a}_4 &= (1,0,0,0),
 \nn \\
 \bm{a}_5 &=(0,1,1,1), & \bm{a}_6 &= (1,0,1,1),  & \bm{a}_7 &= (1,1,0,1), & \bm{a}_8 &= (1,1,1,0).
\end{alignat}

\noindent
For $n = 5; $ 
\begin{alignat}{4}
 \bm{a}_1 &= (0,0,0,0,1), &\quad  \bm{a}_2 &=(0,0,0,1,0), & \quad \bm{a}_3 &= (0,0,1,0,0), & \quad \bm{a}_4 &= (0,1,0,0,0),
 \nn \\
 \bm{a}_5 &=(1,0,0,0,0), & \bm{a}_6 &= (0,0,1,1,1),  & \bm{a}_7 &= (0,1,0,1,1), & \bm{a}_8 &= (1,0,0,1,1),
 \nn \\
 \bm{a}_9 &=(0,1,1,0,1), & \bm{a}_{10} &=(1,0,1,0,1), & \bm{a}_{11} &=(1,1,0,0,1), & \bm{a}_{12} &=(0,1,1,1,0),
 \nn \\
 \bm{a}_{13} &=(1,0,1,1,0), & \bm{a}_{14} &=(1,1,0,1,0), & \bm{a}_{15} &=(1,1,1,0,0), & \bm{a}_{16} &=(1,1,1,1,1).
\end{alignat}
We consider the  realization of $\Zn$-graded Poincar\'e algebra by $ Cl(2m)$ of the following form:
\begin{align}
   H &= \mathbb{I}_{2^m} \otimes \mathsf{H}, \qquad Q_k := Q_{\bm{a}_k} = G_k \otimes \mathsf{Q},
   \nn\\
   Z_{k\ell} &:= Z_{\bm{a}_k \bm{a}_{\ell}} = (-i)^{1-\bm{a}_k \cdot \bm{a}_{\ell}} G_k G_{\ell} \otimes \mathsf{H}
\end{align}
where $ G_k \in Cl(2m).$ 

Now we present models of $\mathbb{Z}_2^4$-graded SQM by $Cl(12)$ and $Cl(10).$ 
\begin{proposition} \label{P11}
 A $2^7$ dimensional model of $ \mathbb{Z}_2^4$-graded SQM is given by the following choice of $ G_k \in Cl(12):$
 \begin{alignat}{4}
   G_1 &= \gamma_1, & \quad G_2 &= \gamma_2, & \quad G_3 &= \gamma_3, & \quad G_4 &= \gamma_4,
   \nn \\
   G_5 &= \Gamma_1 \Gamma_2 \Gamma_3 \gamma_5, & G_6 &= \Gamma_1 \Gamma_2\Gamma_4 \gamma_6, 
   & G_7 &=\Gamma_2 \Gamma_5 \Gamma_6, & G_8 &= \gamma_1 \tg{2} \tg{3} \tg{4}
   \label{Z24Cl12}
 \end{alignat}
 where $ \Gamma_k = i \gamma_k \tg{k}.$
 
 A $ 2^6$ dimensional  model of $\mathbb{Z}_2^4$-graded SQM is given by the $ G_k \in Cl(10)$ which has the same form as \eqref{Z24Cl12} for $k=1,2,\dots,5$ while
 \begin{equation}
       G_6 = \Gamma_3 \Gamma_5, \qquad 
       G_7 = \tg{1} \gamma_2 \tg{3} \tg{4}, \qquad
       G_8 = i \gamma_2 \gamma_3 \gamma_4.
 \end{equation}
\end{proposition}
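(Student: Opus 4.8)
The plan is to reduce everything to the two structural facts that already drove Proposition~\ref{P8}: that each $G_k$ is hermitian and squares to the identity, and that $(G_k,G_\ell)$ anticommutes or commutes according to the parity of $\bm{a}_k\cdot\bm{a}_\ell$, i.e. the dichotomy \eqref{Gcom}. Once these hold, the relations $Q_k^2=H$, $\llbracket Q_k,Q_\ell\rrbracket=2i^{1-\bm{a}_k\cdot\bm{a}_\ell}Z_{k\ell}$ and the centrality $\llbracket Z_{k\ell},Q_m\rrbracket=0$ follow verbatim from the computations in Propositions~\ref{P5} and \ref{P8}, because here $H,Q_k,Z_{k\ell}$ carry exactly the tensor-product forms $\cdot\otimes\sH$, $\cdot\otimes\sQ$ and $\cdot\otimes\sH$, and the defining relations \eqref{DefRels} depend on the $G_k$ only through hermiticity, $G_k^2=\mathbb{I}$, and \eqref{Gcom}. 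Thus no algebra beyond \eqref{Gcom} is required, and the proposition becomes a finite check on the explicit lists of $G_k$.

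First I would dispose of hermiticity and idempotency. Up to a scalar phase, each $G_k$ is a product of distinct generators from $\{\gamma_1,\dots,\gamma_{2m}\}$ (identifying $\tg{j}=\gamma_{j+m}$), possibly grouped into the commuting idempotents $\Gamma_j=i\gamma_j\tg{j}$. For a product of $p$ distinct anticommuting hermitian generators one has $(\gamma_{i_1}\cdots\gamma_{i_p})^{\dagger}=(-1)^{p(p-1)/2}\gamma_{i_1}\cdots\gamma_{i_p}$ and $(\gamma_{i_1}\cdots\gamma_{i_p})^2=(-1)^{p(p-1)/2}\,\mathbb{I}_{2^m}$, so the phase prefactors in \eqref{Z24Cl12} (the powers of $i$ inside the $\Gamma_j$ and the explicit $i$ in $G_8=i\gamma_2\gamma_3\gamma_4$ of the $Cl(10)$ model) are precisely those compensating $(-1)^{p(p-1)/2}$, yielding $G_k^{\dagger}=G_k$ and $G_k^2=\mathbb{I}_{2^m}$. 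For the factors built from the $\Gamma_j$ I would instead invoke \eqref{maxrel1}–\eqref{maxrel2}: the $\Gamma_j$ are mutually commuting hermitian involutions, each commuting with every $\gamma_i,\tg{i}$ for $i\neq j$, so $G_5=\Gamma_1\Gamma_2\Gamma_3\gamma_5$ and $G_7=\Gamma_2\Gamma_5\Gamma_6$ are manifestly hermitian and idempotent.

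The substance is \eqref{Gcom}, which I would organize around the commutation rule for products of Clifford generators. Writing $G_k\sim\gamma_{S_k}$ with $S_k\subset\{1,\dots,2m\}$ the index set of the generators occurring in $G_k$ (the symbol $\sim$ absorbing both the phase and the internal reordering sign), the facts that distinct generators anticommute and each squares to $\mathbb{I}_{2^m}$ give
\begin{equation}
   G_k G_\ell = (-1)^{|S_k|\,|S_\ell| + |S_k\cap S_\ell|}\, G_\ell G_k .
\end{equation}
Hence $(G_k,G_\ell)$ anticommute exactly when $|S_k||S_\ell|+|S_k\cap S_\ell|$ is odd, and the task collapses to reading off the sets $S_k$ from \eqref{Z24Cl12} (for instance $S_5=\{1,2,3,5,7,8,9\}$, $S_7=\{2,5,6,8,11,12\}$, $S_8=\{1,8,9,10\}$ in the $Cl(12)$ case) and comparing the resulting parity against $\bm{a}_k\cdot\bm{a}_\ell$ taken from the lexicographic list. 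This is a bookkeeping sweep over the $\binom{8}{2}=28$ pairs for each model, and I expect it to be the main—though entirely routine—labour; the only delicate points are the correct assignment of indices to $\tg{j}=\gamma_{j+m}$ and the tracking of the shared-index correction $|S_k\cap S_\ell|$, since it is exactly the deliberate insertion of the $\Gamma_j$ factors that tunes $|S_k\cap S_\ell|$ so that the sign in \eqref{Gcom} matches $\bm{a}_k\cdot\bm{a}_\ell$. The $Cl(10)$ model is treated identically with $m=5$ and the modified $G_6,G_7,G_8$, the verification of \eqref{DefRels} being the same computation.
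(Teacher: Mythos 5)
Your proposal is correct, and its skeleton coincides with the paper's proof: both reduce Proposition \ref{P11} to checking that each $G_k$ is hermitian and idempotent and that every pair satisfies the dichotomy \eqref{Gcom}, after which the relations \eqref{DefRels} and the centrality of the $Z_{k\ell}$ follow word-for-word from Propositions \ref{P5} and \ref{P8}. Where you differ is in how \eqref{Gcom} is verified. The paper exploits an inheritance: it observes that $G_1,\dots,G_7$ of the $Cl(12)$ model (resp.\ $G_1,\dots,G_6$ of the $Cl(10)$ model) are exactly of the form \eqref{MaxG} --- with $G_7=\Gamma_2\Gamma_5\Gamma_6$ (resp.\ $G_6=\Gamma_3\Gamma_5$) playing the role of the terminal element $G_M$ --- so all pairs among them satisfy \eqref{Gcom} by the argument already given for Proposition \ref{P8}, and only the pairs involving the exceptional generators ($G_8$ for $Cl(12)$; $G_7,G_8$ for $Cl(10)$) are left to an unstated direct computation. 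You instead check every pair uniformly through the sign rule $G_kG_\ell=(-1)^{|S_k||S_\ell|+|S_k\cap S_\ell|}G_\ell G_k$, which is a correct statement about Clifford monomials (the exact exponent is $|S_k||S_\ell|-|S_k\cap S_\ell|$, the same mod $2$), and your sample index sets $S_5,S_7,S_8$ are right. Your route costs more bookkeeping --- all $28$ pairs per model instead of the $7$ (resp.\ $13$) pairs the paper actually needs to touch --- but it supplies what the paper omits: an explicit, mechanical criterion by which the ``direct computation'' can be audited, treating the exceptional generators on the same footing as the rest. Your handling of hermiticity and idempotency (the $(-1)^{p(p-1)/2}$ phase count for monomials, together with \eqref{maxrel1}--\eqref{maxrel2} for the $\Gamma_j$ blocks) likewise makes explicit the facts the paper uses silently.
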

\begin{proof}
 Since the realization \eqref{Z24Cl12} has the same form as Proposition \ref{P8}, the present proposition can be proved in the same way as Proposition \ref{P8}. 
Namely, it is enough to verify the relations \eqref{Gcom}. 
Note that all $G$'s are taken to be hermitian and idempotent. 
Observe that for $ Cl(12)$ (resp. $ Cl(10)$), from $ G_1 $ to $ G_7 $ (resp. $G_6$) are exactly same form as \eqref{MaxG}. 
Therefore \eqref{Gcom} has been verified for these $G$'s. 
By direct computation, one may easily verify \eqref{Gcom} for other $G$'s so we do not present the computational detail. 
\end{proof}

Proposition \ref{P11} completes the construction of a series of models of $\mathbb{Z}_2^4$-graded SQM in Table \ref{TAB:Cl}. 
Thus we have proved the existence of models of $\mathbb{Z}_2^4$-graded SQM from the minimal to the maximal dimensions. 

In the $ Cl(10) $ model, $ G_3 G_4 = \gamma_3 \gamma_4$ and $ G_2 G_8 = i \gamma_3\gamma_4$ so that $ Z_{34} $ and $ Z_{28}$ are linearly dependent. 
On the other hand, one may verify from expressing $ G_k G_{\ell}$ in terms of $\gamma$'s that all $Z$'s is linearly independent in $ Cl(12)$ model. 
Thus the $ Cl(12)$ model is the \textit{minimal} model in which all the central elements are linearly independent.

Let us turn to $\mathbb{Z}_2^5$-graded SQM by $ Cl(28) $ and $ Cl(26).$ 

\begin{proposition} \label{P12}
  A $ 2^{15}$ dimensional  model of $\mathbb{Z}_2^5$-graded SQM is given if $ G_k \in Cl(28)$ are taken as follows:
\begin{alignat}{3}
  G_j &= \gamma_j \quad (j=1,\dots,5), & \quad 
  G_6 &= \Gamma_1 \Gamma_2 \Gamma_3 \gamma_6, & \quad 
  G_7 &= \Gamma_1 \Gamma_2 \Gamma_4 \gamma_7,
  \nn \\
  G_8 &=  \Gamma_1 \Gamma_2 \Gamma_5 \gamma_8, & \quad
  G_9 &= \Gamma_1 \Gamma_3 \Gamma_4 \Gamma_8 \gamma_9, & \quad
  G_{10} &= \Gamma_1 \Gamma_3 \Gamma_5 \Gamma_7 \gamma_{10},
  \nn \\
  G_{11} &= \Gamma_1 \Gamma_4 \Gamma_5 \Gamma_6 \gamma_{11}, & \quad
  G_{12} &= \Gamma_2 \Gamma_3 \Gamma_4 \Gamma_8  \Gamma_{10} \Gamma_{11} \gamma_{12}, & \quad
  G_{13} &= \Gamma_2 \Gamma_3 \Gamma_5 \Gamma_7  \Gamma_9 \Gamma_{11} \gamma_{13},
  \nn \\
  G_{14} &= \Gamma_2 \Gamma_4 \Gamma_5 \Gamma_6  \Gamma_9 \Gamma_{10} \gamma_{14}, & \quad 
  G_{15} &= \Gamma_1 \Gamma_2 \prod_{j=9}^{14} \Gamma_j, & \quad
  G_{16} &= \prod_{j=1}^5 \tg{j} \prod_{k=6}^8 \gamma_k.
  \label{Z25Cl28}   
\end{alignat}  
  A $ 2^{14}$ dimensional  model of $\mathbb{Z}_2^5$-graded SQM is given if $ G_k \in Cl(26)$ are taken as follows: from $ G_1 $ to $ G_{13} $ and $ G_{16}$ are exactly same as \eqref{Z25Cl28}, while
  \begin{equation}
       G_{14} = \Gamma_1 \Gamma_3 \Gamma_7 \Gamma_8 \Gamma_{11} \Gamma_{12} \Gamma_{13},
       \qquad 
       G_{15} = \gamma_3 \gamma_4 \gamma_5 \tg{12} \tg{13}.
  \end{equation}
\end{proposition}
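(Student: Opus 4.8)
The plan is to reuse the machinery of Propositions~\ref{P8} and~\ref{P11} without change. Both realizations have the exact shape $H=\mathbb{I}\otimes\sH$, $Q_k=G_k\otimes\sQ$, $Z_{k\ell}=(-i)^{1-\bm{a}_k\cdot\bm{a}_{\ell}}G_kG_{\ell}\otimes\sH$ analysed in Proposition~\ref{P8}, so the verification that these operators satisfy \eqref{DefRels}---the computation of $\llbracket Q_k,Q_{\ell}\rrbracket$ and the centrality of the $Z_{k\ell}$---carries over verbatim once two inputs are supplied: each $G_k$ is hermitian and idempotent, $G_k^{\dagger}=G_k$ and $G_k^2=\mathbb{I}$, and the pair relation \eqref{Gcom} holds, namely $\{G_k,G_{\ell}\}=0$ when $\bm{a}_k\cdot\bm{a}_{\ell}=0$ and $[G_k,G_{\ell}]=0$ when $\bm{a}_k\cdot\bm{a}_{\ell}=1$. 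I would therefore reduce the whole proposition to establishing these two properties for the explicit $G_k$ of \eqref{Z25Cl28} and of its $Cl(26)$ variant.

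For the bookkeeping I would put each $G_k$, using $\Gamma_j=i\gamma_j\tg{j}$, into the form of a phase times a square-free monomial in the generators $\gamma_1,\dots,\gamma_{14},\tg{1},\dots,\tg{14}$ (up to index $13$ in the $Cl(26)$ case). The phases are irrelevant to commutation, and two such monomials with $p_k$ and $p_{\ell}$ factors sharing $c_{k\ell}$ of them commute or anticommute according to the single parity $p_kp_{\ell}-c_{k\ell}\bmod 2$, anticommuting exactly when it is odd. Relation \eqref{Gcom} is then the statement $p_kp_{\ell}-c_{k\ell}\equiv\bm{a}_k\cdot\bm{a}_{\ell}+1\pmod 2$ for every pair. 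Hermiticity and idempotency are read off the same normal form: for the generators of $\Gamma$-times-$\gamma$ type and for the pure-$\Gamma$ product $G_{15}$ of the $Cl(28)$ model they are immediate, since the $\Gamma_j$ are commuting, hermitian and idempotent and commute with the trailing $\gamma_m$; for mixed monomials such as $G_{16}=\tg{1}\cdots\tg{5}\gamma_6\gamma_7\gamma_8$ one uses that a square-free product of $p$ distinct generators squares to $(-1)^{p(p-1)/2}\mathbb{I}$ and reverses into $(-1)^{p(p-1)/2}$ times itself, which for $p=8$ gives $+1$ and makes $G_{16}$ hermitian and idempotent with no phase needed.

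The essential economy, exactly as in Proposition~\ref{P11}, is that the generators $G_1,\dots,G_{14}$ of the $Cl(28)$ model (respectively $G_1,\dots,G_{13}$ of the $Cl(26)$ model) are literally the maximal-dimensional generators \eqref{MaxG} for the chosen lexicographic ordering; expanding \eqref{MaxG} confirms, e.g., $G_9=\Gamma_1\Gamma_3\Gamma_4\Gamma_8\gamma_9$ and $G_{14}=\Gamma_2\Gamma_4\Gamma_5\Gamma_6\Gamma_9\Gamma_{10}\gamma_{14}$. Consequently \eqref{Gcom} is already guaranteed by Proposition~\ref{P8} for every pair taken inside this initial block, and what remains is to run the parity count above only over the finitely many pairs involving the substitute generators $G_{15},G_{16}$ in the $Cl(28)$ case, and $G_{14},G_{15},G_{16}$ in the $Cl(26)$ case. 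A useful cross-check is that $\bm{a}_{16}=\bm{1}$, so $\bm{a}_k\cdot\bm{a}_{16}=p(\bm{a}_k)=1$ for every $k$, and the tally must return that $G_{16}$ commutes with all other generators.

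I expect the main difficulty to be organizational rather than conceptual. The substitute generators $G_{15}$ and $G_{16}$ are long monomials that blend bare $\gamma$'s and $\tg{}$'s with the diagonal $\Gamma$-strings carried by the remaining $G_k$, so the shared-index counts $c_{k\ell}$ must be tabulated carefully---a single miscount turns a commutator into an anticommutator and violates \eqref{Gcom}. Splitting each generator into its diagonal $\Gamma$-part (which can meet another generator only at an index where the latter carries a bare $\gamma$ or $\tg{}$) and its genuinely off-diagonal part keeps this tally systematic. The $Cl(26)$ model requires a little more of this checking because three of its generators rather than two depart from \eqref{MaxG}, but it introduces no new obstacle beyond the finite sign accounting.
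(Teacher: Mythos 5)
Your proposal is correct and follows essentially the same route as the paper: the paper likewise reduces Proposition~\ref{P12} to hermiticity/idempotency of the $G_k$ plus the pair relation \eqref{Gcom}, notes that $G_1,\dots,G_{15}$ of \eqref{Z25Cl28} share the form \eqref{MaxG} so Proposition~\ref{P8}'s argument covers them, and dismisses the remaining pairs with ``direct computation.'' Your parity-counting scheme ($p_kp_\ell-c_{k\ell}\bmod 2$) is simply a systematic way of carrying out that same computation, which the paper leaves implicit.
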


\noindent
Note that $ G_1, G_2, \dots, G_{15} $ in \eqref{Z25Cl28} have the same form as \eqref{MaxG}. 
All $G$'s in Proposition \ref{P12} are taken to be hermitian and idempotent. 
\begin{proof}
  The proposition is proved by verifying \eqref{Gcom} by direct computation.  
\end{proof}

In the models of Proposition \ref{P12} all central elements are linearly independent. 
The $G$'s in Proposition \ref{P12} are based on the ones in Proposition \ref{P8} where $ \gamma_k$ is assigned to $ G_k $ except the $G_M. $ 
However, in Proposition \ref{P12} we have a smaller number of gamma matrices. 
If we try to repeat the construction similar to Proposition \ref{P12} with the Clifford algebras of smaller dimension, the realization of $\g$ would be more difficult as we have a smaller number of gamma matrices.

%
\section{Concluding remarks} \label{SEC:CR}

We presented several models of $ \Zn$-graded SQM. 
All the models presented share the common structure: a tensor product of the operators $ \mathsf{H}, \mathsf{Q}$ of the standard SQM and the Clifford gamma matrices of various dimensions. 
Up to $n=4$ we have shown the existence of a sequence of models from the minimal to the maximal dimensions. 
For $ n \geq 5 $ it will be also true the existence of a sequence of models. 
Considering of the models of various dimensions is made legitimate by linear independence of the central elements of $\g.$ 
Use of the Clifford algebras of small dimension produce $\Zn$-graded SQMs where some of the central elements are realized as linearly dependent operators, while the large dimensional Clifford algebras do $\Zn$-graded SQMs where all the central elements are linearly independent.

Although we have proved that the $ {\cal N}=1$ SQM is possible to extend to a $\Zn$-graded setting, physical meanings of $ \Zn$-grading is unclear.  
We observed the existence of a sequence of models of $\Zn$-graded SQM for a fixed value of $n.$ Physical meaning of this sequence is also unclear. 
One way to understand the $ \Zn$-graded SQM further would be considering a $\Zn$-graded classical mechanics. 
That is, try to find classical actions which are invariant under $\Zn$-graded supersymmetry transformation, then quantize the theory to get the $\Zn$-graded SQMs of the present work. 
Especially, the counterpart of the sequence of $\Zn$-graded SQM for a fixed $n$ in classical mechanics will provide interesting information about $\Zn$-grading structure. 
To develop such $\Zn$-graded classical mechanics the mathematical theory of $\Zn$-graded supermanifolds developed in \cite{Mar,CGP3,CGP4,CoKPo} will be helpful. 
Another way to understand meanings of $\Zn$-grading is to consider multipartite systems. 
In the present work, we restrict ourselves to a single particle system (see  \eqref{StandardSQM}). 
Multiparticle extensions will be an urgent task for further developments.

Another interesting direction to study is  $ \Zn$-graded SQM with ${\cal N} > 1$ and  conformal extensions. In the case of $\mathbb{Z}_2^2$-graded SQM, these two extensions of ${\cal N}=1 $ $\mathbb{Z}_2^2$-graded SQM were done in  \cite{NAKASD} where the conformal extension was also given by taking a tensor product of the Clifford gamma matrices and a model of standard superconformal mechanics.  
It is an interesting question whether similar construction is possible for $n >2.$ 

Finally, we comment a universality of Propositions \ref{P2}, \ref{P5}, \ref{P8}, \ref{P11},  \ref{P12}. 
We assume in these propositions that $ \mathsf{Q}, \mathsf{H}$ are given by \eqref{StandardSQM} as we are interested in quantum mechanics. 
However, this assumption is not needed. $ \mathsf{Q}, \mathsf{H}$ can be any representations (not necessary quantum mechanical ones) of the standard supersymmety algebra. 
If we take a representation of $\mathsf{Q}, \mathsf{H}$ different from \eqref{StandardSQM}, then $ \sigma_3$ in Proposition \ref{P2} should be replaced with $ \mathsf{S} $ such that $ \{ \mathsf{Q}, \mathsf{S} \} = 0 $ since the proof of the propositions are independent of the explicit form of $ \mathsf{Q}, \mathsf{H}. $

%
%
%
%
%
%

\end{document}